\newcommand{\Q}{\mathbb{Q}}
\newcommand{\N}{\mathbb{N}}
\newcommand{\Z}{\mathbb{Z}}
\newcommand{\sgn}[1]{\mathrm{sgn}(#1)}
\newcommand{\tab}{\;\;\;\;\;}
\newtheorem{lemma}{Lemma}
\newtheorem{theorem}{Theorem}
\theoremstyle{definition}
\newtheorem*{definition}{Definition}
\theoremstyle{remark}
\newtheorem*{remark}{Remark}
\title{The reachability problem for affine functions on the integers}
\author{Daniel Fremont}
\thanks{The author wishes to thank Henry Cohn, without whose copious and very helpful advice this work would not have been possible. Thanks are also due to the many people who gave feedback and suggestions, in particular Christoph Haase.}
\address{Department of Mathematics \\
Massachusetts Institute of Technology \\
Cambridge, MA 02139}
\email{dfremont@mit.edu}
\date{December 15, 2012}
\keywords{reachability problems; affine functions; regular expressions}
\begin{document}

\begin{abstract}
We consider the problem of determining, given $x, y \in \Z^k$ and a finite set $F$ of affine functions on $\Z^k$, whether $y$ is reachable from $x$ by applying the functions $F$. We also consider the analogous problem over $\N^k$. These problems are known to be undecidable  for $k \ge 2$. We give \textsf{2-EXPTIME} algorithms for both problems in the remaining case $k=1$. The exact complexities remain open, although we show a simple \textsf{NP} lower bound.
\end{abstract}

\maketitle

\section{Introduction}

Many dynamical systems with simple evolution rules nevertheless exhibit unpredictable long-term behavior. Multidimensional systems in particular can easily be so complex that questions like state reachability are undecidable. For instance, this is the case for states in $[0,1]^2$ evolving under a piecewise-linear function \cite{koiran}. There are even simpler nondeterministic examples, such as states in $\Q^2$ under a finite set of affine functions \cite{bell-potapov}. For both systems, having states with two coordinates whose evolution is not independent is essential to the undecidability proofs. Generally, while it is often not difficult to prove undecidability for systems with sufficiently high dimension, determining if and when the transition to decidability occurs at lower dimensions is harder. In particular, it is not known whether the reachability problem is decidable for nondeterministic affine evolution on $\Q$.

In this paper, we consider the simpler problem of reachability under nondeterministic affine evolution on $\Z$: given $x, y \in \Z$ and a finite set $F$ of affine functions $f_i(z) = a_i z + b_i$ with $a_i, b_i \in \Z$, determine whether $y$ is reachable from $x$ by applying functions in $F$. The generalizations to $\Z^n$ are undecidable for all $n \ge 2$, as implicitly shown in \cite{paterson} (and a little more clearly in \cite[Section 4.9]{gaubert-katz}). We prove that the remaining case, $n = 1$, is decidable, giving a \textsf{2-EXPTIME} algorithm for it in Section \ref{reach-z}. We also consider the version of this problem with evolution over $\N$: in this problem, the functions $f_i$ can still have negative coefficients, but may not be applied if they would yield a negative result. In Section \ref{reach-n} we give a \textsf{2-EXPTIME} algorithm for this problem by modifying our algorithm for the case over $\Z$. Finally, in Section \ref{lower-bound} we show that the problems over $\Z$ and $\N$ are both \textsf{NP}-hard.

\section{Affine reachability over $\Z$} \label{reach-z}

We begin by defining some notation that we will use throughout this paper.
\begin{definition}
If $S$ is a set of affine functions on $\Z$, we will call any function of the form $G = s_K \circ \dots \circ s_1$ for some $K \in \N$ with each $s_i \in S$ an $S$-\emph{composition}. We will want to discuss the individual functions $s_i$ which appear in an $S$-composition, so $G$ is formally the tuple $(s_1, \dots, s_K)$, but we will often view $S$-compositions as functions without further comment. The \emph{orbit} of $G$ applied to the argument $x$ is the set of values $\{x, s_1(x), (s_2 \circ s_1)(x), \dots, (s_K \circ \dots \circ s_1)(x) \}$. We write $x \xrightarrow{S} y$ \emph{via} $G$ to indicate that $G$ is an $S$-composition such that $G(x) = y$, and $x \xrightarrow{S} y$ to assert the existence of such a $G$.
\end{definition}
In this notation, the affine reachability problem over $\Z$ is to determine, given $x, y \in \Z$ and a finite set $F$ of affine functions $f_i(z) = a_i z + b_i$ with $a_i, b_i \in \Z$, whether $x \xrightarrow{F} y$. The problem takes several qualitatively different forms depending on the values of the linear coefficients $a_i$. The simplest nontrivial case is when they all satisfy $|a_i| > 1$. 

\begin{lemma} \label{expanding-reachable}
There is an \textsf{EXPTIME} algorithm to decide, given any $x, y \in \Z$ and a finite set $F$ of functions $f_i(z) = a_i z + b_i$ with $a_i, b_i \in \Z$ and satisfying $|a_i| > 1$, whether $x \xrightarrow{F} y$.
\end{lemma}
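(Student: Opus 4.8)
The plan is to exploit the expanding nature of the $f_i$: since $|a_i|\ge 2$, each $f_i$ pushes points away from its fixed point, so an orbit can stay bounded only inside a window fixed by the offsets. Put $M=\max_i|b_i|$. The basic estimate is that if $|v|>M$ then for every $f_i(z)=a_iz+b_i$ in $F$ we have $|f_i(v)|\ge|a_i|\,|v|-|b_i|\ge 2|v|-M>|v|$, and in particular $|f_i(v)|>M$. Hence once the orbit of an $F$-composition leaves the window $[-M,M]$ it never re-enters it, and from that moment on the absolute values strictly increase. So the points of any orbit lying in $[-M,M]$ form an initial segment, and every orbit from $x$ to $y$ has one of three shapes: it is the single point $x=y$; it stays entirely in $[-M,M]$; or it is an initial segment in $[-M,M]$ (possibly empty, if $|x|>M$) followed by an increasing ``tail'' ending at $y$.

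I would next bound the tail length. If $v_0,\dots,v_t$ is the tail, with $|v_0|>M$ and $v_t=y$, the estimate above gives $|v_{j+1}|-M\ge 2(|v_j|-M)$, so $|y|-M=|v_t|-M\ge 2^t(|v_0|-M)\ge 2^t$, using that $M\in\N$ and $v_0$ is an integer with $|v_0|\ge M+1$. Thus $t\le\log_2|y|$, which is linear in the input length $n$. Consequently the tail is produced by a composition of logarithmic length, found by a depth-first search of depth $\lceil\log_2(|y|+1)\rceil$ that branches over the $|F|$ functions and prunes any branch whose value reaches absolute value $|y|$ without being $y$ (by monotonicity such a branch can never hit $y$); this visits at most $2^{O(n^2)}$ states, hence runs within \textsf{EXPTIME}.

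The bounded part of the orbit is handled by finite-graph reachability. Form the directed graph $H$ on vertex set $\{v\in\Z:|v|\le M\}$ with an edge $v\to f_i(v)$ whenever $|f_i(v)|\le M$, and let $R$ be the set of vertices reachable from $x$ in $H$ (so $R=\emptyset$ when $|x|>M$, and $x\in R$ when $|x|\le M$). Since $M$ and each $|a_i|$ are at most exponential in $n$, the graph has at most exponentially many vertices and edges, so computing $R$ by breadth-first search takes \textsf{EXPTIME}. Let $Z=\{f_i(v):v\in R,\ |f_i(v)|>M\}\cup\{\,x : |x|>M\,\}$ be the possible first points of an orbit outside the window; then $x\xrightarrow{F}y$ holds iff $x=y$, or $y\in R$, or some $z\in Z$ reaches $y$ by an increasing tail of length at most $\log_2|y|$ — exactly the three orbit shapes above, as one checks by taking $z$ to be the first orbit point outside $[-M,M]$. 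The set $Z$ has at most exponential size, so the whole procedure runs in \textsf{EXPTIME}.

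The main obstacle is getting the structural dichotomy stated and justified precisely: the ``no return to $[-M,M]$'' property, the resulting fact that the in-window points form a prefix, and the doubling estimate $|v_{j+1}|-M\ge 2(|v_j|-M)$ that forces the logarithmic tail bound. Once these are in hand, the algorithm is simply the combination of a (necessarily exponential-size) graph search for the prefix and a short exhaustive search for the tail, and the complexity bookkeeping is routine.
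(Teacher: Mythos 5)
Your proof is correct, and its core observation --- that each $f_i$ strictly increases absolute value outside a window $[-M,M]$ determined by the offsets, so an orbit that leaves the window never re-enters it --- is exactly the paper's. The difference is in how the two arguments turn this into an algorithm. You keep the small window, do a graph search on $[-M,M]$ for the in-window prefix of the orbit, and then handle the expanding tail separately via the doubling estimate $|v_{j+1}|-M\ge 2(|v_j|-M)$, which forces the tail to have length $O(\log|y|)$ and makes a bounded-depth exhaustive search feasible. The paper avoids the tail analysis entirely by a simple enlargement: it sets $R=\max\{1+\max_i|b_i|,\;|y|\}$ and notes that every preimage of $y$ under an $F$-composition --- hence every point of any orbit ending at $y$ --- lies in $[-R,R]$, so a single graph search on that (still only exponentially large) interval decides reachability. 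Both versions run in \textsf{EXPTIME}; yours gives a sharper structural picture of the orbits and a smaller graph, at the cost of the extra tail bookkeeping, while the paper's one-interval formulation is shorter and is the form that gets reused later (in the remark accommodating a function $f_j(z)=-z+b_j$, and in Lemma \ref{n-expanding-reachable} over $\N$), where the ``absolute value strictly increases outside the window'' monotonicity your tail search relies on would need to be revisited.
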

\begin{proof}
Outside of some finite interval, for instance $[-Q,Q]$ with $Q = 1 + \max |b_i|$, each function $f_i$ strictly increases absolute value. Putting $R = \max \{Q, |y|\}$, for any $F$-composition $G$ and $z \in \Z$ with $|z| > R$ we have $|G(z)| > |z| > |y|$ and thus $G(z) \ne y$. This means that all preimages of $y$ under $F$-compositions must lie in the finite interval $I = [-R,R]$. Create a directed graph $D$ with a vertex for each integer $z \in I$, and add edges from $z$ to $f_i(z)$ for each $i$ satisfying $f_i(z) \in I$. Since every preimage of $y$ under an $F$-composition lies in $I$, we have $x \xrightarrow{F} y$ if and only if $x \in I$ and there is a path in $D$ from $x$ to $y$. We can determine whether such a path exists in exponential time using graph search, since $D$ has exponentially-many vertices (at most linearly-many in the values of $b_i$ and $y$).
\end{proof}
\begin{remark}
As will be important later, with a small modification of this algorithm we can handle the presence of one $f_j$ with $a_j = -1$, so that $f_j(z) = -z+b_j$. The only change necessary is to broaden $I$ to the interval $I' = [\min(-R,-R+b_j), \max(R,R+b_j)]$. Then $f_j^{-1} = f_j$ maps $I'$ onto itself, so all preimages of $y$ under $F$-compositions are in $I'$, and the argument goes through as above.
\end{remark}

However, when a function in $F$ is of the form $g(z) = z + k$ this method breaks down, because the preimages of $y$ under $F$-compositions are no longer bounded. This also happens if there are two functions of the form $f(z)=-z+b$, since then their composition is of the form $g(z)=z+k$. Fortunately, functions like $g$ can contribute to an $F$-composition in basically only one way.
\begin{lemma}
\label{extraction-lemma}
For any set $S$ of functions $f_i(z) = a_i z + b_i$ with $a_i, b_i \in \Z$ for $1 \le i \le N$ and function $f_0(z) = z + k$, put $F = S \cup \{f_0\}$. Then for any $F$-composition $G$, we have:
\begin{enumerate}[\tab (a)]
\item If $G = f_{e_0} \circ \dots \circ f_{e_j} \circ f_0^n \circ f_{e_{j+1}} \circ \dots \circ f_{e_K}$, then $G(z) = H(z) + ank$ where $H = f_{e_0} \circ \dots \circ f_{e_K}$ and $a = a_{e_0} \dots a_{e_j}$. \label{single-extraction}
\item $G(z) = H(z) + ak$ for some $S$-composition $H$ and $a \in \Z$. If $a_i > 0$ for every function $f_i$ appearing in $G$, then $a \ge 0$. \label{total-extraction}
\end{enumerate}
\end{lemma}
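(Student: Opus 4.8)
The plan is to establish (a) by a short computation and then obtain (b) from it by induction on the number of copies of $f_0$ appearing in $G$.

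For (a), I would use the elementary fact that an affine function $\phi(t) = at+b$ satisfies $\phi(w+c) = \phi(w) + ac$ for all $w,c \in \Z$. Write $w = (f_{e_{j+1}} \circ \dots \circ f_{e_K})(z)$ and let $\phi = f_{e_0} \circ \dots \circ f_{e_j}$, an affine function whose linear coefficient is $a = a_{e_0}\cdots a_{e_j}$ (the empty product $1$ when the block of $f_0$'s sits at the left end of $G$). Since $f_0^n(w) = w + nk$, we get $G(z) = \phi(w + nk) = \phi(w) + ank = H(z) + ank$, using that $\phi(w) = H(z)$ by associativity of composition. The only care needed is keeping the order of composition straight and checking that the degenerate cases, where the block of $f_0$'s is at the beginning or end of $G$, cause no trouble.

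For (b), I would induct on the number $m$ of occurrences of $f_0$ in the tuple $G$. If $m = 0$ then $G$ is already an $S$-composition and I take $H = G$ and $a = 0$. If $m \ge 1$, select (say) the leftmost maximal run of $n \ge 1$ consecutive $f_0$'s and write $G = f_{e_0} \circ \dots \circ f_{e_j} \circ f_0^n \circ f_{e_{j+1}} \circ \dots \circ f_{e_K}$. By part (a), $G(z) = G'(z) + a'nk$, where $G' = f_{e_0}\circ\dots\circ f_{e_j}\circ f_{e_{j+1}}\circ\dots\circ f_{e_K}$ is an $F$-composition with $m - n < m$ occurrences of $f_0$ and $a' = a_{e_0}\cdots a_{e_j}$. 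Applying the inductive hypothesis to $G'$ yields an $S$-composition $H$ and $a'' \in \Z$ with $G'(z) = H(z) + a''k$, so $G(z) = H(z) + (a'' + a'n)k$ and I take $a = a'' + a'n$.

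The positivity claim then falls out of the same induction: every function appearing in $G'$ also appears in $G$, and every factor of $a'$ is the linear coefficient of such a function, so if all linear coefficients of functions in $G$ are positive then $a'' \ge 0$ by induction and $a' > 0$ (a nonempty product of positive integers, or the empty product $1$), whence $a = a'' + a'n \ge 0$ since $n \ge 1$. I do not anticipate any real obstacle here; the argument is essentially bookkeeping, and the only mild pitfalls are the composition order in (a) and the observation that $f_0$ itself has linear coefficient $1 > 0$, so its presence is consistent with the positivity hypothesis.
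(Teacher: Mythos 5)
Your proposal is correct and follows essentially the same route as the paper: part (a) is the same elementary computation (you evaluate $\phi(w+nk)=\phi(w)+ank$ directly, while the paper phrases it as the commutation identity $f_i\circ f_0^n = f_0^{a_i n}\circ f_i$ applied $j+1$ times), and part (b) iterates (a) over the occurrences of $f_0$, which you organize as an induction where the paper just applies (a) repeatedly and sums the contributions. The positivity argument is likewise the same observation that each contribution is a product of positive coefficients.
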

\begin{proof}
\begin{enumerate}[(a)]
\item We have $(f_i \circ f_0^n)(z) = a_i (z + nk) + b_i = (a_i z + b_i) + a_i n k = (f_0^{a_i n} \circ f_i)(z)$, so $G = f_{e_0} \circ \dots \circ f_{e_j} \circ f_0^n \circ f_{e_{j+1}} \circ \dots \circ f_{e_K} = f_{e_0} \circ \dots \circ f_{e_{j-1}} \circ f_0^{a_{e_j} n} \circ f_{e_j} \circ \dots \circ f_{e_K}$. Repeating this $j$ more times gives $G = f_0^{a_{e_0} \dots a_{e_j} n} \circ f_{e_0} \circ \dots \circ f_{e_K} = H(z) + ank$ with $H = f_{e_0} \circ \dots \circ f_{e_K}$ and $a = a_{e_0} \dots a_{e_j}$.
\item Apply the previous result once for each instance of $f_0$ in $G$, giving $G = f_0^{c_0} \circ \dots \circ f_0^{c_L} \circ f_{e_0} \circ \dots \circ f_{e_K}$ for some $c_0, \dots c_L \in \Z$ which are products of the coefficients $a_i$. Then $G(z) = H(z) + ak$ with $H$ the $S$-composition $H = f_{e_0} \circ \dots \circ f_{e_K}$ and $a = c_0 + \dots + c_L$. If $a_i > 0$ for each $f_i$ appearing in $G$, then $c_i > 0$ and so $a \ge 0$ (we could have $a = 0$ if there were no instances of $f_0$ in $G$). \qedhere
\end{enumerate}
\end{proof}
We now have several cases, based on which $S$-compositions $G$ satisfy $x \xrightarrow{S} y \pmod k$ via $G$.
\begin{lemma} \label{lemma-cases}
For any $x,y \in \Z$, finite set $S$ of functions $f_i(z) = a_i z + b_i$ with $a_i, b_i \in \Z$ and $a_i \ne 0$, and function $g(z) = z + k$ with $k \in \Z$ and $k \ne 0$, put $F = S \cup \{g\}$ and $\mathbf{G} = \{ G : x \xrightarrow{S} y \pmod k \: \text{via} \; G \}$. Then the following are true:
\begin{enumerate}[\tab (A)]
\item If $\mathbf{G} = \emptyset$, then $x \not \xrightarrow{F} y$. \label{case-none}
\item If some $G \in \mathbf{G}$ satisfies $\sgn{G(x) - y} \ne \sgn{k}$, then $x \xrightarrow{F} y$. \label{case-opposite-sign}
\item If some $G \in \mathbf{G}$ with $G = f_{e_0} \circ \cdots \circ f_{e_K}$ has $a_{e_j} < 0$ for some $j$, then $x \xrightarrow{F} y$. \label{case-negative-coeff}
\item If none of the above cases hold, then $x \not \xrightarrow{F} y$. \label{case-same-sign}
\end{enumerate}
\end{lemma}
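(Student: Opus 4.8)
The plan is to use Lemma~\ref{extraction-lemma} as a dictionary between $F$-compositions and $S$-compositions. On one side, part~(b) says any $F$-composition sending $x$ to $y$ has the form $H(z)+ak$ for an $S$-composition $H$ and some $a\in\Z$, so $H(x)=y-ak\equiv y\pmod k$ and $H\in\mathbf{G}$. On the other side, inserting powers of $g=f_0$ into a member of $\mathbf{G}$ --- at the front, or between two of its constituent functions --- shifts its value at $x$ by a multiple of $k$ that part~(a) lets us describe. Cases~(A) and~(B) are then immediate. For~(A): if $x\xrightarrow{F}y$ via some $G'$, the reduction just described produces an $H\in\mathbf{G}$, contradicting $\mathbf{G}=\emptyset$. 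For~(B): given $G\in\mathbf{G}$ with $\sgn{G(x)-y}\ne\sgn{k}$, write $y-G(x)=mk$ with $m\in\Z$ (possible since $G\in\mathbf{G}$); the sign condition forces $m\ge0$, so $g^m\circ G$ is an $F$-composition with $(g^m\circ G)(x)=y$.

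Case~(C) is where an $F$-composition must actually be built. Let $G=f_{e_0}\circ\cdots\circ f_{e_K}\in\mathbf{G}$ have $a_{e_j}<0$ for some $j$, and take $j$ to be the least such index; since the earlier coefficients are all positive, $a:=a_{e_0}\cdots a_{e_j}<0$. Write $y-G(x)=mk$ with $m\in\Z$. By part~(a) of Lemma~\ref{extraction-lemma}, inserting $f_0^n$ between $f_{e_j}$ and $f_{e_{j+1}}$ gives an $F$-composition acting as $z\mapsto G(z)+ank$, and prepending $f_0^p$ adds a further $pk$, so the resulting $F$-composition sends $x$ to $G(x)+(an+p)k$. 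Since $a<0$, we may pick an integer $n\ge0$ large enough that $an\le m$ and then set $p:=m-an\ge0$, so that $an+p=m$ and the composition sends $x$ to $y$. Hence $x\xrightarrow{F}y$.

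For case~(D), assume (A), (B), and (C) all fail, and suppose toward a contradiction that $x\xrightarrow{F}y$ via some $G'$. By part~(b) of Lemma~\ref{extraction-lemma} there is an $S$-composition $H$ --- namely $G'$ with its copies of $f_0$ deleted --- and an integer $a$ with $G'(z)=H(z)+ak$, and $H\in\mathbf{G}$. Since~(C) fails, no member of $\mathbf{G}$ has a negative coefficient, so the coefficients of $H$ are all positive; as these are exactly the coefficients of the non-$f_0$ functions of $G'$, and $f_0$ has coefficient $1>0$, every function appearing in $G'$ has positive coefficient, and the positivity clause of part~(b) gives $a\ge0$. But then $y-H(x)=ak$ with $a\ge0$, which forces $\sgn{H(x)-y}\ne\sgn{k}$, so $H$ satisfies the hypothesis of case~(B) --- contradicting that~(B) fails. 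Therefore $x\not\xrightarrow{F}y$.

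I expect case~(C) to be the crux: it is the one place a witnessing $F$-composition must be exhibited, and the point is that a single negative coefficient lets the insertion of $f_0^n$ overshoot by an arbitrarily large negative multiple of $k$, after which prepended copies of $g$ climb back up to $y$, making every target congruent to $y$ modulo $k$ reachable. The other cases are bookkeeping around Lemma~\ref{extraction-lemma}, the only mildly delicate point being, in case~(D), that the extracted $S$-composition uses precisely the non-$f_0$ functions of $G'$, so that the positivity hypothesis of part~(b) genuinely applies.
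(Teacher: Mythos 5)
Your proof is correct and follows essentially the same route as the paper: both reduce everything to Lemma~\ref{extraction-lemma}, and in case~(C) both exploit the negative product $a = a_{e_0}\cdots a_{e_j} < 0$ to reach arbitrarily negative multiples of $k$ before topping up with nonnegative powers of $g$. The only (cosmetic) difference is that you directly decompose the required shift $m$ as $an+p$ with $n,p\ge 0$, whereas the paper first flips the sign of $G(x)-y$ and then reuses the case~(B) argument.
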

\begin{proof}
\begin{enumerate}[(A)]
\item By Lemma \ref{extraction-lemma}\ref{total-extraction}, any $F$-composition can be written as an $S$-composition plus a multiple of $k$. If $\mathbf{G} = \emptyset$, then no $S$-composition can reach $y \pmod k$ from $x$, and therefore neither can any $F$-composition.
\item If some $G \in \mathbf{G}$ satisfies $\sgn{G(x) - y} \ne \sgn{k}$, then either $\sgn{G(x) - y} = 0$ or $\sgn{G(x) - y} = -\sgn{k}$. If the first of these is true, then $G(x) = y$ and so $x \xrightarrow{F} y$ via $G$. Otherwise, there is some $n \in \N$ such that $G(x) - y = -nk$, so $y = G(x) + nk = (g^n \circ G)(x)$. Putting $G' = g^n \circ G$, we have $x \xrightarrow{F} y$ via $G'$.
\item If some $G \in \mathbf{G}$ with $G = f_{e_0} \circ \cdots \circ f_{e_K}$ has $a_{e_j} < 0$ for some $j$, take the smallest such $j$. Defining $G' = f_{e_0} \circ \cdots f_{e_j} \circ g^n \circ f_{e_{j+1}} \circ \cdots \circ f_{e_K}$, by Lemma \ref{extraction-lemma}\ref{single-extraction} we have $G'(z) = G(z) + a n k$ with $a = a_{e_0} a_{e_1} \cdots a_{e_j}$, and because $a_{e_k} > 0$ for all $k < j$ by our choice of $j$, we have $a < 0$. We may assume that case \ref{case-opposite-sign} does not hold, since otherwise we have $x \xrightarrow{F} y$ immediately as shown above. Then we have $\sgn{G(x) - y} = \sgn{k}$, and so $\sgn{G(x) - y} = -\sgn{ak}$. Therefore with $n$ sufficiently large we have $\sgn{G'(x) - y} = \sgn{G(x) - y + ank} = -\sgn{G(x) - y} = -\sgn{k}$. Then as in case \ref{case-opposite-sign}, we have $x \xrightarrow{F} y$ via $G'' = g^m \circ G'$ for some $m \in \N$.
\item Suppose that $x \xrightarrow{F} y$ via some $H$. Then by Lemma \ref{extraction-lemma}\ref{total-extraction} we have $H = g^a \circ G$ for some $S$-composition $G$ and $a \in \Z$. Now $G(x) = H(x) - ak \equiv H(x) = y \pmod k$, so $G \in \mathbf{G}$. Since case \ref{case-opposite-sign} does not hold, we have $\sgn{G(x) - y} = \sgn{k}$. Since case \ref{case-negative-coeff} does not hold, we have $a \ge 0$, again by Lemma \ref{extraction-lemma}\ref{total-extraction}. If $a = 0$, then $G(x) = H(x) = y$, so $\sgn{G(x) - y} = 0 \ne \sgn{k}$ and case \ref{case-opposite-sign} holds, contrary to our assumption. So $a > 0$, and thus $\sgn{G(x) -y} = \sgn{k} = \sgn{ak}$. But this is impossible, since $G(x) - y = H(x) - y - ak = -ak$. So we cannot have $x \xrightarrow{F} y$. \qedhere
\end{enumerate}
\end{proof}

To test cases (\ref{case-none}), (\ref{case-opposite-sign}), and (\ref{case-negative-coeff}), we use the following algorithm.

\begin{lemma} \label{largest-mod-reachable}
Given any $x, y, k \in \Z$ with $k \ne 0$ and a set $F$ of functions $f_i(z) = a_i z + b_i$ with $a_i, b_i \in \Z$ and $a_i \ne 0$ for $1 \le i \le N$, put $\mathbf{G} = \{ G : x \xrightarrow{F} y \pmod k \: \text{via} \; G \}$. There is a \textsf{2-EXPTIME} algorithm such that:
\begin{enumerate}
\item If $\mathbf{G} = \emptyset$, the algorithm returns \textsc{Empty}.
\item If there is some $G \in \mathbf{G}$ with $G = f_{e_0} \circ \cdots \circ f_{e_K}$ and $a_{e_j} < 0$ for some $j$, the algorithm returns \textsc{Negative}.
\item Otherwise, the algorithm returns $\sup \; \{ G(x) : G \in \mathbf{G} \}$.
\end{enumerate}
\end{lemma}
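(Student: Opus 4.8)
The plan is to reduce all three tasks to searches in the \emph{residue graph} $D_k$, whose vertices are the residues $r \in \Z/k\Z$ and which has an edge $r \to (a_i r + b_i) \bmod k$ for each $f_i \in F$. Directly from the definition of $x \xrightarrow{F} y \pmod k$, we have $\mathbf{G} = \emptyset$ exactly when no path in $D_k$ joins $x \bmod k$ to $y \bmod k$, which a graph search decides. To test whether some $G \in \mathbf{G}$ uses an $f_i$ with $a_i < 0$, I would search two stacked copies of $D_k$ --- ``before'' and ``after'' the first use of a negative coefficient --- with the edge for each negative-$a_i$ function crossing from the first copy to the second, returning \textsc{Negative} iff $y \bmod k$ is reachable in the second copy from $x \bmod k$ in the first. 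As $k$ is written in binary, $D_k$ has exponentially many vertices, so both searches run in \textsf{EXPTIME}.

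Once the first two cases are excluded, no $G \in \mathbf{G}$ uses a negative coefficient, so $\mathbf{G}$ is exactly the set of $F^+$-compositions reaching $y \bmod k$, where $F^+ = \{ f_i \in F : a_i \ge 1\}$; we must return $\sup \{ G(x) : G \in \mathbf{G}\}$. The key point is that every $f_i \in F^+$ is nondecreasing, hence so is every $F^+$-composition. Thus if some $G \in \mathbf{G}$ has a repeated residue --- write $G = G_2 \circ C \circ G_1$ with $C$ mapping a residue $r$ to itself and $G_1(x) \equiv r \pmod k$ --- then either $C(G_1(x)) \le G_1(x)$, in which case $G_2 \circ G_1 \in \mathbf{G}$ and $(G_2 \circ G_1)(x) \ge G(x)$, so the repetition may be deleted without lowering the value; or $C(G_1(x)) > G_1(x)$, in which case, writing $C(z) = Az + B$ with $A \ge 1$, the orbit $C^n(G_1(x))$ increases without bound, so $(G_2 \circ C^n \circ G_1)(x) \to +\infty$ and hence $\sup \{ G(x) : G \in \mathbf{G}\} = +\infty$. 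Therefore the supremum is either $+\infty$ or is attained (being the maximum of a set of integers bounded above) by a composition with no repeated residue and hence of length $< k$; in the latter case a routine induction bounds it, together with every intermediate value of the maximizing composition, by an explicit doubly-exponential quantity $\mathcal{B}_0$ in the values of $x$, $k$ and the $a_i, b_i$.

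This dichotomy turns the problem into a bounded fixpoint computation. Let $R$ be the set of residues lying on some path from $x \bmod k$ to $y \bmod k$ in the residue graph of $F^+$, and note that every $F^+$-composition from $x \bmod k$ to a residue of $R$ stays inside $R$, so in particular $\sup \{ G(x) : G \in \mathbf{G}\} = M(y \bmod k)$, where $M(r)$ denotes the largest value with which $r$ is reached from $x$ by an $F^+$-composition. I would then run a Bellman--Ford-style iteration (equivalently, a search through the doubly-exponentially many pairs $(r,v)$ with $r \in R$ and $|v| \le \mathcal{B}_0$) computing $\min(M(r), \mathcal{B}_0 + 1)$ for each $r \in R$. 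If this iteration fails to stabilize within $|R| + 1$ rounds, or ever assigns $\mathcal{B}_0 + 1$ to some residue of $R$ --- either of which forces $M(r) = +\infty$ for some $r \in R$, i.e.\ a value-increasing residue-loop inside $R$ --- then $M(y \bmod k) = +\infty$, since any path from such an $r$ to $y \bmod k$ is nondecreasing; so we return $+\infty$. Otherwise every $M(r)$ is a genuine integer $\le \mathcal{B}_0$, correctly computed by the iteration, and we return the value for $y \bmod k$. Since $\mathcal{B}_0$ is doubly-exponential, its bit-length is exponential and the whole procedure runs in \textsf{2-EXPTIME}.

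The main obstacle, I expect, is making the infinite case precise. Whether $\sup \{ G(x) : G \in \mathbf{G}\}$ is finite is \emph{not} a property of the residue graph alone: for $k = 3$ and $F = \{z \mapsto 2z\}$ with $x = -2$, the residue graph contains the cycle $1 \to 2 \to 1$, yet the orbit $-2, -4, -8, \dots$ stays congruent to $1$ infinitely often while tending to $-\infty$, so the relevant supremum is the finite value $-2$. What needs care is showing that the cycle-deletion argument interacts with the truncation at $\mathcal{B}_0$ exactly as required --- that ``$R$ contains a value-increasing residue-loop'' (detected by the failure of the iteration to stabilize below $\mathcal{B}_0 + 1$) is both necessary and sufficient for the supremum to be $+\infty$. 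The rest is bookkeeping about the sizes of the graphs and of the integers that arise.
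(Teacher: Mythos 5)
Your proof is correct, but it takes a genuinely different route from the paper's. The paper also starts from the residue graph $D$ and the same graph search for \textsc{Empty}, but it then converts $D$ (viewed as a DFA) into a regular expression of doubly-exponential size, detects \textsc{Negative} by looking for a negative literal in that expression, puts the expression into disjunctive normal form, and computes the supremum clause by clause: walking along each star-free concatenation and, at each starred subexpression $\alpha^*$, invoking a recursively computed predicate $I(z,\alpha)$ (``some composition matching $\alpha$ increases $z$'') to decide whether the supremum jumps to $+\infty$. You replace all of the regular-expression machinery with a direct longest-path (Bellman--Ford) computation of $M(r)$ on the residue graph, justified by the cycle-deletion/pumping dichotomy: once negative coefficients are excluded, every remaining composition is an increasing affine map, so a residue-cycle that fails to increase the value at its entry point can be deleted without lowering the final value, while one that does increase it can be pumped to $+\infty$; hence the supremum is either $+\infty$ or attained by a cycle-free composition of length $<|k|$, bounded by an explicit doubly-exponential $\mathcal{B}_0$, and non-stabilization or overflow of the truncated iteration exactly certifies the infinite case. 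Both arguments rest on the same monotonicity fact (suprema propagate through application of increasing maps), and you correctly isolate the one real subtlety, namely that a residue-cycle alone does not certify an infinite supremum --- the cycle must increase the value where it is entered. What your approach buys: it avoids the doubly-exponential regular expression entirely, the product-automaton test for \textsc{Negative} is simpler than the paper's regex-based test, and since the iteration runs for only $|R|+1 \le |k|$ rounds on integers of singly-exponential bit-length, it in fact yields an \textsf{EXPTIME} bound for this lemma, improving on the paper's \textsf{2-EXPTIME} (whose bottleneck is precisely the DFA-to-regex conversion).
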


The flags \textsc{Empty} and \textsc{Negative} are enough for us to detect cases \ref{case-none} and \ref{case-negative-coeff} of Lemma \ref{lemma-cases}. If $k < 0$, the value $V = \sup \; \{ G(x) : G \in \mathbf{G} \}$ allows us to recognize case \ref{case-opposite-sign}, because this case holds if and only if $V \ge y$. If $k > 0$ we need the value $V' = \inf \; \{ G(x) : G \in \mathbf{G} \}$ instead, since then case \ref{case-opposite-sign} holds if and only if $V' \le y$. The modifications to the algorithm of Lemma \ref{largest-mod-reachable} required to make it compute $V'$ instead of $V$ are simple and obvious (just exchanging ``increases'' with ``decreases'' in several places, etc.), so we omit them. Now we prove Lemma \ref{largest-mod-reachable}, assuming a couple of auxiliary lemmas (Lemmas \ref{I-lemma} and \ref{I-algorithm}) which we will return to afterwards.

\begin{proof}[Proof of Lemma \ref{largest-mod-reachable}]
First we check if there is \emph{any} $F$-composition mapping $x$ to $y \pmod k$. Create a directed graph $D$ with a vertex for each congruence class mod $k$. Add edges indicating which classes are mapped to which under each $f_i$. Then there is a path in $D$ from the congruence class of $x$ to the congruence class of $y$ if and only if $x \xrightarrow{F} y \pmod k$. Use graph search to determine if there is such a path, and return \textsc{Empty} if not. Since $D$ has $|k|$ vertices, this search takes exponential time.

If there are paths from $x \pmod k$ to $y \pmod k$, we need to analyze all of them to see which ones yield the largest final value. We can conveniently describe the paths using regular expressions. Consider $D$ to be a deterministic finite automaton, where an input symbol $e \in \{1, \dots, N\}$ causes the edge corresponding to applying $f_e$ to be followed. Let the initial state be $x \pmod k$, and the only accepting state be $y \pmod k$. If $s = e_1 \dots e_K$ is a sequence of input symbols, we write $P_s = f_{e_K} \circ \dots \circ f_{e_1}$ (note the order!), and then $D$ accepts $s$ if and only if $P_s(x) \equiv y \pmod k$.

Now we convert $D$ into a regular expression $R$ with the same language $L(R)$. We write concatenation multiplicatively, use $|$ to denote union/alternation, and use $\epsilon$ and $\emptyset$ as the symbols for the empty string and the empty language respectively. Because $D$ has exponentially-many vertices (and a linearly-sized input alphabet), $R$ has at most doubly-exponential size $|R|$, and the conversion from $D$ to $R$ can be done in time at most polynomial in $|R|$ (see \cite{mcnaughton-yamada, dfa-to-re-upper}). We store $R$ as a tree, with literals, $\epsilon$, or $\emptyset$ at the leaf nodes and operators at the other nodes. The ``length'' $|R|$ in this representation is just the total number of nodes.

Next, reduce $R$ to not include the symbol $\emptyset$ by repeatedly passing through $R$ applying the identities $E | \emptyset = E$, $E\emptyset = \emptyset$, and $\emptyset^* = \epsilon$ for any expression $E$. Each pass takes time linear in the length of $R$ and strictly decreases its length, so there can be at most $|R|$ passes and the total time taken is $O(|R|^2)$. Afterwards, if the symbol $\emptyset$ appears in $R$ it must not be operated on by any operator, since otherwise one of the identities above would apply. Therefore $\emptyset$ can appear in $R$ only if $R = \emptyset$, but since $L(R)$ is nonempty (because we returned \textsc{Empty} above if so) this is not the case. So $R$ does not contain the symbol $\emptyset$.

Now if $R$ contains a literal corresponding to a function $f_i$ with $a_i < 0$ (which can obviously be determined in $O(|R|)$ time), then $x \xrightarrow{F} y \pmod k$ via an $F$-composition which includes $f_i$, so we return \textsc{Negative}. Otherwise, we convert $R$ into disjunctive normal form $S_1 | S_2 | \cdots | S_M$ where each $S_i$ has no union operations, by iteratively applying the identities $(\alpha | \beta)^* = (\alpha^* \beta^*)^*$, $\alpha (\beta | \gamma) = \alpha \beta | \alpha \gamma$, and $(\alpha | \beta) \gamma = \alpha \gamma | \beta \gamma$. Each identity either decreases the number of unions or moves one closer to the topmost level, so this process will also take time polynomial in $|R|$.

We say that a regular expression $E$ is \emph{reduced} if it contains only literals appearing in $R$ and has no $\emptyset$ symbols or union operations. The reductions we have done above ensure that any expression produced by concatenating subexpressions of the clauses $S_i$ is reduced. Since every literal in $R$ corresponds to a function $f_i$ with $a_i > 0$ (because we would have returned \textsc{Negative} otherwise), and the composition of two linear polynomials with positive linear coefficients has a positive linear coefficient, for any reduced expression $E$ and any $s \in L(E)$, $P_s$ has a positive linear coefficient --- this will be important in a moment. We will want to refer to those $F$-compositions which are generated by reduced expressions, and to decrease the proliferation of symbols, we will say that an $F$-composition $G$ \emph{matches} the expression $E$ if there is some $s \in L(E)$ such that $G = P_s$. Then $\mathbf{G}$ consists precisely of those $F$-compositions which match $R$.

Now, given some $z \in \Z$ and a reduced expression $E$, define $I(z,E)$ to mean that $\exists s \in L(E) : P_s(z) > z$. In words, $I(z,E)$ is true if and only if there is some $F$-composition matching $E$ which increases $z$. If $L(E)$ is finite, computing $I(z,E)$ is only a matter of testing various cases --- the difficulty is handling expressions with stars. Fortunately, we can reduce $I$ to its values on expressions with fewer stars using the identity $I(z, \ell \alpha^* \beta) \iff I(P_\ell(z), \alpha) \lor I(z, \ell \beta)$, which we will prove in Lemma \ref{I-lemma}. This then allows us to compute $I(z,E)$ recursively in polynomial time, as we will show in Lemma \ref{I-algorithm}.

Now we are ready to return to the main problem. For each clause $S_i$, we want to find the supremum $V_i$ of the possible values $x$ is mapped to by any $F$-composition matching $S_i$. To do this, we keep track of the supremum of the values $x$ is mapped to by $F$-compositions which match progressively-longer prefixes of $S_i$. Write $S_i = T_1 \dots T_K$ by flattening out concatenations, so that each $T_j$ is either a literal or a starred subexpression. Let $x^{(i)}_j = \sup \; \{ P_s(x) : s \in L(T_1 \dots T_j) \}$ for $1 \le j \le K$. Clearly $V_i = x^{(i)}_K$, and we put $x^{(i)}_0 = x$ (since the largest possible value reachable after applying no functions is the starting value $x$). For $j \ge 1$, we calculate $x^{(i)}_j$ in terms of $x^{(i)}_{j-1}$ as follows. If $T_j$ is a literal, then any $F$-composition matching $T_1 \dots T_j$ must be of the form $P_{T_j} \circ q$ where $q$ is an $F$-composition matching $T_1 \dots T_{j-1}$. Since by definition the largest possible value of $q(x)$ is $x^{(i)}_{j-1}$, and $P_{T_j}$ has a positive linear coefficient, the largest possible value of $(P_{T_j} \circ q)(x)$ is $P_{T_j}(x^{(i)}_{j-1})$. Thus $x^{(i)}_j = P_{T_j}(x^{(i)}_{j-1})$. If $T_j$ is a starred subexpression instead, $T_j = \alpha^*$, we compute $I(x^{(i)}_{j-1}, \alpha)$. If this is true, then some $F$-composition $p$ matching $\alpha$ increases $x^{(i)}_{j-1}$, and because $p$ has a positive linear coefficient it must increase all values larger than $x^{(i)}_{j-1}$. So we can increase $x^{(i)}_{j-1}$ as much as we want by repeatedly applying $p$, and thus $x^{(i)}_j = \infty$. If $I(x^{(i)}_{j-1}, \alpha)$ is false, then no $F$-composition matching $\alpha$ increases $x^{(i)}_{j-1}$, so $x^{(i)}_j = x^{(i)}_{j-1}$ (since $\alpha^*$ is matched by $P_\epsilon$, which leaves $x^{(i)}_{j-1}$ fixed). Thus we can iteratively compute $V_i$, beginning with $x^{(i)}_0 = x$ and proceeding through $x^{(i)}_K = V_i$. There are $O(|S_i|)$ intermediate values $x^{(i)}_j$ which need to be computed, and each one requires at most one call to $I$ on an expression of size at most $O(|S_i|)$. Thus we can compute each $V_i$ in time polynomial in $|S_i|$, and all the values $V_i$ together in time polynomial in $|R|$.

Now put $V = \max V_i$. Because the union of the languages of each clause $S_i$ is the language of $R$, $V$ is the largest value reachable using $F$-compositions matching $R$ (or $\infty$ if arbitrarily large values are reachable). Therefore $V = \sup \; \{ G(x) : G \in \mathbf{G} \}$, the desired value, and we return it.

As mentioned above, the first stage of this algorithm takes exponential time, and all subsequent stages take time at most polynomial in $|R|$. Since $|R|$ is at most doubly-exponential in the input length, the entire algorithm takes at most doubly-exponential time.
\end{proof}

Lemma \ref{largest-mod-reachable} depends on our ability to efficiently calculate $I(z,E)$. As was mentioned above, the key to doing this is to reduce $I(z,E)$ to values of $I$ on smaller subexpressions, as made possible by the following lemma.

\begin{lemma} \label{I-lemma}
If $z \in \Z$, $\alpha$ and $\beta$ are reduced regular expressions, and $\ell$ is a sequence of literals, the following are true:
\begin{enumerate}
\item $I(z, \ell \alpha^* \beta) \iff I(z, \ell \beta) \lor I(P_\ell(z), \alpha)$ \label{I-lemma-literal-star-prefix}
\item $I(z, \ell \alpha^*) \iff I(z, \ell) \lor I(P_\ell(z), \alpha)$ \label{I-lemma-literal-star}
\item $I(z, \alpha^* \beta) \iff I(z, \alpha) \lor I(z, \beta)$ \label{I-lemma-star-prefix}
\item $I(z, \alpha^*) \iff I(z,\alpha)$ \label{I-lemma-star}
\end{enumerate}
\end{lemma}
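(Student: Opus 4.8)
The plan is to prove part (\ref{I-lemma-literal-star-prefix}) in full and then deduce the other three parts as special cases. Taking $\ell$ to be the empty sequence of literals (so $P_\ell = \mathrm{id}$) turns part (\ref{I-lemma-literal-star-prefix}) into part (\ref{I-lemma-star-prefix}); taking $\beta = \epsilon$, which is a reduced expression with $L(E\epsilon) = L(E)$ and $P_\epsilon = \mathrm{id}$, turns it into part (\ref{I-lemma-literal-star}); and doing both, together with the fact that $I(z,\epsilon)$ is false because $P_\epsilon(z) = z$ is not greater than $z$, turns it into part (\ref{I-lemma-star}). So everything reduces to part (\ref{I-lemma-literal-star-prefix}). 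Throughout I will abbreviate $w = P_\ell(z)$ and use $P_{uv} = P_v \circ P_u$ for strings $u,v$, so that $I(z,\ell\alpha^*\beta)$ asserts the existence of $s \in L(\alpha^*)$ and $t \in L(\beta)$ with $P_t(P_s(w)) > z$, while $I(z,\ell\beta)$ asserts the existence of $t \in L(\beta)$ with $P_t(w) > z$, and $I(P_\ell(z),\alpha) = I(w,\alpha)$ asserts the existence of $u \in L(\alpha)$ with $P_u(w) > w$. Two facts about the maps $P_s$ attached to words $s$ in the language of a reduced expression will be used repeatedly: each such $P_s$ is strictly increasing, because its linear coefficient is a positive integer; and if such a map $P_u$ satisfies $P_u(v) > v$ for some integer $v$, then $P_u^n(v) \to +\infty$, because $(P_u^n(v))_{n \ge 0}$ is then a strictly increasing sequence of integers.

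For the implication ``$I(z,\ell\beta)$ or $I(w,\alpha)$ implies $I(z,\ell\alpha^*\beta)$'' I would split on the disjunct. If $I(z,\ell\beta)$ holds, witnessed by $t \in L(\beta)$ with $P_t(w) > z$, then $\epsilon \in L(\alpha^*)$ yields the word $\ell t \in L(\ell\alpha^*\beta)$ with $P_{\ell t}(z) = P_t(w) > z$. If instead $I(w,\alpha)$ holds, witnessed by $u \in L(\alpha)$ with $P_u(w) > w$, then I pick any $t \in L(\beta)$ --- possible since a reduced expression, having no $\emptyset$ and no union, denotes a nonempty language --- and note that $P_u^n(w) \to +\infty$, hence also $P_t(P_u^n(w)) \to +\infty$ because $P_t$ has positive linear coefficient; choosing $n$ large enough that $P_t(P_u^n(w)) > z$ and putting $s = u^n \in L(\alpha^*)$ gives the word $\ell s t \in L(\ell\alpha^*\beta)$ with $P_{\ell s t}(z) = P_t(P_s(w)) > z$.

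For the reverse implication I would argue by contraposition: assuming $\neg I(z,\ell\beta)$ and $\neg I(w,\alpha)$, I conclude $\neg I(z,\ell\alpha^*\beta)$. From $\neg I(w,\alpha)$, every $u \in L(\alpha)$ satisfies $P_u(w) \le w$; a short induction on the number of $L(\alpha)$-factors of a word $s \in L(\alpha^*)$ --- whenever the intermediate value is at most $w$, applying the next factor $P_u$, which is increasing, keeps it at most $P_u(w) \le w$ --- then gives $P_s(w) \le w$ for every $s \in L(\alpha^*)$. From $\neg I(z,\ell\beta)$, every $t \in L(\beta)$ satisfies $P_t(w) \le z$. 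Hence for any $s \in L(\alpha^*)$ and $t \in L(\beta)$, monotonicity of $P_t$ gives $P_t(P_s(w)) \le P_t(w) \le z$, so no word of $L(\ell\alpha^*\beta)$ carries $z$ strictly above itself, which is exactly $\neg I(z,\ell\alpha^*\beta)$.

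The one place requiring genuine input is the forward direction when the $\alpha$-side disjunct holds: this is where integrality matters, since a linear map with positive \emph{integer} coefficient that increases some value increases it without bound under iteration, whereas over $\Q$ a contraction toward a fixed point lying above $z$ could increase a value while no iterate ever passed $z$. Packaging this as the observation ``a strictly increasing integer sequence is unbounded'' keeps the rest of the argument purely formal, and the single monotonicity fact ``positive linear coefficient implies strictly increasing'' drives both the compositional induction in the reverse direction and the step that $P_t$ preserves divergence to $+\infty$. I would finish by checking that the degenerate configurations --- $\ell$ empty, $\beta = \epsilon$, or $\alpha$ denoting only $\{\epsilon\}$ --- are handled by the argument above with no special casing.
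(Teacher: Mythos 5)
Your proof is correct and follows essentially the same route as the paper: part (\ref{I-lemma-literal-star-prefix}) is proved directly (forward direction by iterating an increasing witness for $\alpha$ using the fact that positive integer linear coefficients make the maps strictly increasing and hence iterates unbounded; reverse direction by showing no composition of non-increasing factors can increase the value), and parts (\ref{I-lemma-literal-star})--(\ref{I-lemma-star}) follow by specializing $\ell$ or $\beta$ to $\epsilon$. Your explicit remarks that reduced expressions have nonempty languages and that integrality is what makes the iteration unbounded are welcome clarifications of points the paper leaves implicit, but the substance is identical.
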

\begin{proof}
\begin{enumerate}
\item $I(z, \ell \beta)$ implies $I(z, \ell \alpha^* \beta)$ because $\ell \beta$ matches $\ell \alpha^* \beta$. If $I(P_\ell(z), \alpha)$ holds, then there is an $F$-composition $p$ matching $\alpha$ which increases $P_\ell(z)$. Because $p$ has a positive linear coefficient as observed above (since $\alpha$ is reduced), $p$ must increase anything greater than $P_\ell(z)$, and thus repeated applications of $p$ can increase $P_\ell(z)$ as much as desired. Therefore for any particular $F$-composition $q$ matching $\beta$, there is some $n \in \N$ such that $q \circ p^n \circ P_\ell$ increases $z$ (since $\beta$ is reduced and so $q$ also has a positive linear coefficient). Since $q \circ p^n \circ P_\ell$ matches $\ell \alpha^* \beta$, $I(z, \ell \alpha^* \beta)$ holds.

Suppose neither $I(z, \ell \beta)$ nor $I(P_\ell(z), \alpha)$ hold. Any $F$-composition matching $\ell \alpha^* \beta$ is of the form $q \circ p \circ P_\ell$ where $q$ matches $\beta$ and $p$ is a composition of $F$-compositions matching $\alpha$. Since by our assumption no polynomials matching $\alpha$ increase $P_\ell(z)$, and these have positive linear coefficients, they cannot increase anything smaller than $P_\ell(z)$. Thus no composition of $F$-compositions matching $\alpha$ increases $P_\ell(z)$, and so $p$ does not increase $P_\ell(z)$. Now, since $q \circ P_\ell$ does not increase $z$ by assumption (since it matches $\ell \beta$), and $q$ has a positive linear coefficient, $q \circ p \circ P_\ell$ does not increase $z$ either. Thus no $F$-composition matching $\ell \alpha^* \beta$ increases $z$, and $I(z, \ell \alpha^* \beta)$ does not hold.
\item Put $\beta = \epsilon$ in (\ref{I-lemma-literal-star-prefix}) and use $L(\ell \alpha^* \epsilon) = L(\ell \alpha^*)$ and $L(\ell \epsilon) = L(\ell)$.
\item Put $\ell = \epsilon$ in (\ref{I-lemma-literal-star-prefix}) and use $L(\epsilon \beta) = L(\beta)$ and $P_\epsilon(z) = z$.
\item Put $\beta = \epsilon$ in (\ref{I-lemma-star-prefix}), use $L(\alpha^* \epsilon) = L(\alpha^*)$, and note that $I(z, \epsilon)$ is clearly false. \qedhere
\end{enumerate}
\end{proof}

These relationships allow us to give a straightforward recursive algorithm to compute $I$.

\begin{lemma} \label{I-algorithm}
There is a \textsf{P} algorithm to compute $I(z, E)$ for any $z \in \Z$ and reduced regular expression $E$.
\end{lemma}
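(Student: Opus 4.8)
The plan is to turn the identities of Lemma \ref{I-lemma} into a recursive procedure. On input $(z, E)$ with $E$ reduced, first flatten the outermost concatenations of $E$ into a list $T_1 \cdots T_n$ in which each $T_i$ is a literal, the symbol $\epsilon$, or a starred reduced subexpression. If no $T_i$ is starred, then $L(E)$ is a single string $s = e_1 \cdots e_K$, so $I(z,E)$ holds if and only if $P_s(z) > z$, and we compute $P_s = f_{e_K} \circ \cdots \circ f_{e_1}$ and return the corresponding truth value. Otherwise let $j$ be the least index with $T_j = \alpha^*$ starred, set $\ell = T_1 \cdots T_{j-1}$ (a sequence of literals, since $j$ is least) and $\beta = T_{j+1} \cdots T_n$; by the first identity of Lemma \ref{I-lemma} we have $I(z, E) \iff I(z, \ell\beta) \lor I(P_\ell(z), \alpha)$, so we return the disjunction of two recursive calls (short-circuiting the $\lor$). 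The expressions $\ell\beta$, $\alpha$, and $\beta$ are all reduced because $E$ is, so the recursion stays within the hypotheses of Lemma \ref{I-lemma}, and correctness follows.

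The heart of the running-time bound is to check that this recursion --- even with no memoization --- unfolds into a tree of only polynomially many nodes. The key observation is that each non-base-case call \emph{discharges} the star atop its chosen $\alpha^*$: that $*$ occurs in neither subcall, being physically deleted in $\ell\beta$ and lying above $\alpha$ in the other branch, while the remaining $*$-nodes of $E$ are partitioned, $\alpha$ inheriting those inside $\alpha^*$ and $\ell\beta$ inheriting those in $\beta$. Hence the number of non-base-case calls is at most the number of $*$-nodes of $E$, so at most $|E|$, and the whole recursion tree has $O(|E|)$ nodes. I would also track the integer argument threaded through the recursion: along any root-to-leaf path it changes only from $z$ to $P_w(z)$, where $w$ is a concatenation of at most $|E|$ of the prefixes $\ell$, each of length below $|E|$; so $|w| = O(|E|^2)$, the map $P_w$ is a composition of $O(|E|^2)$ of the $f_i$, and the argument always has bit-length polynomial in the input. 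Since at each node the flattening, the search for the first star, and the single evaluation $P_\ell(z')$ all take polynomial time, the algorithm runs in polynomial time overall.

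The step I expect to be the main obstacle is precisely this size control. Reading the first identity of Lemma \ref{I-lemma} naively suggests a recursion with two subcalls each carrying almost as many stars as its parent --- which would give an exponential-size tree --- so the star-discharging argument above is what must be made precise. The companion worry is that iterated composition of the $f_i$ could blow up the coefficients, but the $O(|E|^2)$ bound on the length of any composition actually evaluated keeps all integers polynomially bounded. The rest --- the base case, the handling of $\epsilon$, the reducedness of the subexpressions, and the composition-order bookkeeping for $P_s$ --- is routine.
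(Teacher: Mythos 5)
Your proposal is correct and follows essentially the same route as the paper: recurse by locating the first starred factor of the flattened concatenation and applying identity (\ref{I-lemma-literal-star-prefix}) of Lemma \ref{I-lemma}, with sequences of literals as the base case. Your star-discharging count even gives a slightly tighter $O(|E|)$ bound on the recursion tree (the paper settles for $O(|E|^2)$ via a levels-times-width argument), and your explicit check that the threaded integer arguments stay of polynomial bit-length is a detail the paper leaves implicit.
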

\begin{proof}
First, note that if an expression $\alpha$ is a sequence of literals, then only $P_\alpha$ matches it, and we may determine $I(z,\alpha) = (P_\alpha(z) > z)$ directly by evaluating $P_\alpha(z)$. Now we compute $I(z,E)$ recursively, breaking into cases based on the topmost operator or symbol of $E$:
\begin{itemize}
\item $E = \epsilon$: $I(z,E)$ is clearly false.
\item $E$ is a literal: As noted above, $I(z,E) = (P_E(z) > z)$ may be directly calculated.
\item $E = F^*$: By part \ref{I-lemma-star} of Lemma \ref{I-lemma}, $I(z,E) = I(z,F^*) = I(z,F)$.
\item $E = F G$: By flattening as necessary, we may write $E = F_1 F_2 \cdots F_K$ for some $K \in \N$ with $K \ge 2$ and where none of the subexpressions $F_i$ are concatenations. If any of the subexpressions $F_i$ are $\epsilon$, we simply drop them and renumber appropriately --- this obviously leaves $L(E)$ fixed. Thus we may assume that each subexpression $F_i$ is either a starred subexpression or a literal. If each one is a literal, then $E$ is a sequence of literals and as noted above $I(z,E)$ can be computed directly. Otherwise, find the smallest $j$ such that $F_j = \alpha^*$ is a starred subexpression. There are several cases:
\begin{itemize}
\item $j = 1$: Then $E = \alpha^* \beta$ where $\beta = F_2 \cdots F_K$, so by part \ref{I-lemma-star-prefix} of Lemma \ref{I-lemma} we have $I(z,E)=I(z,\alpha^* \beta)=I(z,\alpha) \lor I(z,\beta)$.
\item $j = K$: Then $E = \ell \alpha^*$ where $\ell = F_1 \cdots F_{K-1}$ is a sequence of literals, so by part \ref{I-lemma-literal-star} of Lemma \ref{I-lemma} we have $I(z,E) = I(z,\ell \alpha^*) = I(z, \ell) \lor I(P_\ell(z),\alpha)$.
\item $1 < j < K$: Then $E = \ell \alpha^* \beta$ where $\ell = F_1 \cdots F_{j-1}$ is a sequence of literals and $\beta = F_{j+1} \cdots F_K$, so by part \ref{I-lemma-literal-star-prefix} of Lemma \ref{I-lemma} we have $I(z,E) = I(z,\ell \alpha^* \beta) = I(z,\ell \beta) \lor I(P_\ell(z),\alpha)$.
\end{itemize}
\end{itemize}

In each case $I(z,E)$ is either directly computable, equivalent to $I(x, F)$ for some $x \in \Z$ and $F$ a proper subexpression of $E$, or equivalent to $I(x,F) \lor I(y, G)$ for some $x,y \in \Z$, $F$ a proper subexpression or concatenation of disjoint subexpressions of $E$, and $G$ a subexpression of $E$ disjoint from $F$. Because $I(z,E)$ is always reduced to values of $I$ on strictly shorter expressions, the tree of recursive calls has at most $O(|E|)$ levels. Since $I(z,E)$ is always reduced to values of $I$ on disjoint expressions made up of subexpressions of $E$, each level of the tree can have at most $O(|E|)$ calls. Thus the entire tree has at most $O(|E|^2)$ calls with polynomial computation each, so $I(z,E)$ may be computed in polynomial time.
\end{proof}

We can now give an algorithm to solve the affine reachability problem over $\Z$ in full generality.

\begin{theorem} \label{main-algorithm}
There is a \textsf{2-EXPTIME} algorithm to decide, given any $x, y \in \Z$ and a finite set $F$ of functions $f_i(z) = a_i z + b_i$ with $a_i, b_i \in \Z$, whether $x \xrightarrow{F} y$.
\end{theorem}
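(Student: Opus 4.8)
The plan is to reduce the problem in full generality to the special cases already dispatched, via a case analysis on the linear coefficients $a_i$, arranged so that the dominant cost comes from Lemma~\ref{largest-mod-reachable}.

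First I would normalize $F$. Any function with $a_i = 1$ and $b_i = 0$ is the identity and may be deleted without affecting the reachability relation. To remove the constant functions (those with $a_i = 0$), let $C = \{ b_i : a_i = 0 \}$ and $F_{\neq 0} = \{ f_i \in F : a_i \neq 0 \}$; since applying a constant function discards all earlier history, $x \xrightarrow{F} y$ holds if and only if $x \xrightarrow{F_{\neq 0}} y$, or $C \neq \emptyset$ and either $y \in C$ or $c \xrightarrow{F_{\neq 0}} y$ for some $c \in C$. This replaces the instance with at most $|C| + 1$ instances, all of whose functions have nonzero linear coefficient, at polynomial overhead. (If $F_{\neq 0} = \emptyset$, each such subinstance is the trivial test whether $x = y$.)

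So assume every $a_i \neq 0$ and no $f_i$ is the identity. If some $f_i$ has $a_i = 1$, then $f_i(z) = z + k$ with $k \neq 0$; if instead two distinct functions $f_i, f_j$ have linear coefficient $-1$, then $b_i \neq b_j$ and $(f_i \circ f_j)(z) = z + (b_i - b_j)$ is a nonzero translation which, being an $F$-composition, may be adjoined to $F$ without changing reachability and without disturbing the property that the other functions have nonzero linear coefficient. In either situation we may write $F = S \cup \{g\}$ with $g(z) = z + k$, $k \neq 0$, and every function of $S$ having nonzero linear coefficient, so Lemma~\ref{lemma-cases} applies. I would then run the algorithm of Lemma~\ref{largest-mod-reachable} on $S, x, y, k$: if it returns \textsc{Empty} we are in case~(\ref{case-none}) and $x \not\xrightarrow{F} y$; if it returns \textsc{Negative} we are in case~(\ref{case-negative-coeff}) and $x \xrightarrow{F} y$; otherwise it returns $V = \sup\{G(x) : G \in \mathbf{G}\}$ (or, using the $\inf$-computing variant when $k > 0$, the value $V' = \inf\{G(x) : G \in \mathbf{G}\}$), and case~(\ref{case-opposite-sign}) holds --- equivalently $x \xrightarrow{F} y$ --- exactly when $V \geq y$ (if $k < 0$) or $V' \leq y$ (if $k > 0$), with case~(\ref{case-same-sign}) and $x \not\xrightarrow{F} y$ otherwise. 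If on the other hand no $f_i$ has $a_i = 1$ and at most one has $a_i = -1$, then every remaining $f_i$ has $|a_i| > 1$, and Lemma~\ref{expanding-reachable} --- together with its Remark when a single function with $a_i = -1$ is present --- decides reachability in \textsf{EXPTIME}. These cases are exhaustive.

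Finally I would bound the running time: the normalization and reductions take polynomial time and yield polynomially many subinstances, each decided by Lemma~\ref{expanding-reachable} in \textsf{EXPTIME} or by Lemma~\ref{largest-mod-reachable} in \textsf{2-EXPTIME}, so the whole procedure runs in \textsf{2-EXPTIME}. I expect no genuine mathematical obstacle here; the work is entirely bookkeeping --- verifying that the case split on the $a_i$ is exhaustive, that the constant-function and double-negation reductions preserve the reachability relation, and that the \textsc{Empty}/\textsc{Negative}/$V$ outputs of Lemma~\ref{largest-mod-reachable} correspond precisely to cases~(\ref{case-none})--(\ref{case-same-sign}) of Lemma~\ref{lemma-cases}.
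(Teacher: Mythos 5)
Your proposal is correct and follows essentially the same route as the paper: normalize away identities and constants, then split on whether some $a_i = 1$ (invoke Lemmas~\ref{lemma-cases} and~\ref{largest-mod-reachable} with $g(z)=z+k$), whether two functions have $a_i = -1$ (adjoin their composition, a nonzero translation, and reduce to the previous case), or whether all remaining coefficients satisfy $|a_i|>1$ up to one sign flip (Lemma~\ref{expanding-reachable} and its remark). The only difference is that you flatten the constant-function reduction into polynomially many subinstances rather than recursing, which is exactly the improvement the paper itself notes parenthetically in its runtime analysis.
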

\begin{proof}
There are several cases:
\begin{enumerate}
\item For some $j$, $a_j = 0$: Clearly, $x \xrightarrow{F} y$ if and only if either $x \xrightarrow{F \setminus \{f_j\}} y$ or $b_j \xrightarrow{F \setminus \{f_j\}} y$; recursively determine each of these and return true if and only if at least one is true. \label{ma-case-constant}
\item For some $j$, $a_j = 1$ and $b_j = 0$: Clearly $x \xrightarrow{F} y$ if and only if $x \xrightarrow{F \setminus \{f_j\}} y$ (since $f_j$ is the identity), so determine this recursively and return the result. \label{ma-case-identity}
\item For some $j$, $a_j = 1$ and $b_j \ne 0$: Assume for now that $b_j < 0$ (we will handle $b_j > 0$ momentarily). Run the algorithm of Lemma \ref{largest-mod-reachable} on $F \setminus \{f_j\}$ with $k = b_j$. If it returns \textsc{Empty}, then case \ref{case-none} of Lemma \ref{lemma-cases} holds, so return false. If it returns \textsc{Negative}, then case \ref{case-negative-coeff} holds, so return true. Otherwise, the algorithm returns $V = \sup \; \{ G(x) : x \xrightarrow{F \setminus \{f_j\}} y \pmod {b_j} \text{ via } G \}$. Since now either case \ref{case-opposite-sign} or case \ref{case-same-sign} of Lemma \ref{lemma-cases} holds, $x \xrightarrow{F} y$ if and only if $\sgn{V - y} \ne \sgn{b_j} = -1$. So we return true if and only if $V \ge y$. If $b_j$ was in fact positive, we use the variant of the algorithm of Lemma \ref{largest-mod-reachable} which computes $V' = \inf \; \{ G(x) : x \xrightarrow{F \setminus \{f_j\}} y \pmod {b_j} \text{ via } G \}$. By Lemma \ref{lemma-cases} again we have $x \xrightarrow{F} y$ if and only if $\sgn{V' - y} \ne \sgn{b_j} = +1$, so we return true if and only if $V' \le y$. \label{ma-case-shift}
\item For some $j$, $a_j = -1$, and $|a_i| > 1$ for all $i \ne j$: Use the algorithm of Lemma \ref{expanding-reachable}, modified as in the remark to handle $f_j(z) = -z + b_j$. \label{ma-case-involution}
\item For some $j, k$ with $j \ne k$, $a_j = a_k = -1$: Define $g = f_j \circ f_k$. Clearly $x \xrightarrow{F} y$ if and only if $x \xrightarrow{F \cup \{g\}} y$. But $g(z) = (f_j \circ f_k)(z) = -(-z+b_k)+b_j = z + (b_j - b_k)$, and since $b_j \ne b_k$ (since $f_j$ and $f_k$ are distinct functions), $g(z) = z + c$ for some $c \in \Z$ with $c \ne 0$. Recursively solve $x \xrightarrow{F \cup \{g\}} 0$ using case (\ref{ma-case-shift}) and return the result. \label{ma-case-two-involutions}
\item Otherwise, $|a_i| > 1$ for all $i$: Use the algorithm of Lemma \ref{expanding-reachable}. \label{ma-case-expanding}
\end{enumerate}
Cases (\ref{ma-case-involution}) and (\ref{ma-case-expanding}) invoke the algorithm of Lemma \ref{expanding-reachable} and take exponential time. Case (\ref{ma-case-shift}) invokes the algorithm of Lemma \ref{largest-mod-reachable} and takes doubly-exponential time. Case (\ref{ma-case-two-involutions}) makes a recursive call which always uses case (\ref{ma-case-shift}), so it also takes doubly-exponential time. Finally, cases (\ref{ma-case-identity}) and (\ref{ma-case-constant}) make one and two recursive calls respectively, each with one less affine function. Thus in total the algorithm will make at most exponentially-many recursive calls (this can easily be reduced to linearly-many by improving the handling of case (\ref{ma-case-constant}), but this does not decrease the worst-case runtime), with at most a doubly-exponential amount of computation each. Therefore the algorithm runs in at most doubly-exponential time.
\end{proof}

\section{Affine reachability over $\N$} \label{reach-n}

We can also consider the affine reachability problem over $\N$. Much of the analysis is the same, so we will only write out in full detail the considerations which are new. The main difference from the version over $\Z$ is that now we cannot apply any functions which would yield a negative result.
\begin{definition}
An $F$-composition is \emph{valid with respect to its argument} if every integer in its orbit for the given argument is nonnegative. Often the argument of the composition will be clear from context, in which case we will simply say that the composition is \emph{valid}. We write $x \xrightarrow{F}_+ y$ to indicate that there is a valid $F$-composition $G$ such that $G(x) = y$.
\end{definition}
With this definition, the affine reachability problem over $\N$ is to determine, given $x,y \in \N$ and a finite set $F$ of functions $f_i(z) = a_i z + b_i$ with $a_i,b_i \in \Z$, whether $x \xrightarrow{F}_+ y$. As before, there are various cases depending on the values of the linear coefficients $a_i$. The case where they all satisfy $|a_i| > 1$ is still simple.
\begin{lemma} \label{n-expanding-reachable}
There is an \textsf{EXPTIME} algorithm to decide, given any $x, y \in \N$ and a finite set $F$ of functions $f_i(z) = a_i z + b_i$ with $a_i, b_i \in \Z$ and satisfying $|a_i| > 1$, whether $x \xrightarrow{F}_+ y$.
\end{lemma}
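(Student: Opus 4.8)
The plan is to adapt the proof of Lemma~\ref{expanding-reachable} almost verbatim; the only genuinely new point is that the nonnegativity requirement built into $\xrightarrow{F}_+$ is automatically enforced by the vertex set of the graph we construct. As before I would put $Q = 1 + \max_i |b_i|$ and $R = \max\{Q, |y|\}$, and note that for $|z| > Q$ we have $|f_i(z)| = |a_i z + b_i| \ge 2|z| - |b_i| > |z|$, so each $f_i$ strictly increases absolute value outside $[-Q, Q]$.

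Next I would check that any \emph{valid} $F$-composition $G$ with $G(x) = y$ has its whole orbit (for the argument $x$) inside $\{0, 1, \dots, R\}$. Indeed, suppose some orbit value $z$ satisfies $z > R$. If $a_i > 0$ then $f_i(z) > z > R$, while if $a_i < 0$ then, since $z > Q \ge 1 + |b_i|$, we get $f_i(z) = a_i z + b_i \le -2z + |b_i| < 0$, so $f_i$ cannot be legally applied at $z$. Hence every later value of a valid orbit is again $> R \ge |y|$, so $y$ is never reached, a contradiction. Together with validity (which forces all orbit values to be nonnegative) this puts the orbit in $\{0, \dots, R\}$.

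Finally I would build a directed graph $D$ with a vertex for each integer in $\{0, 1, \dots, R\}$ and an edge from $z$ to $f_i(z)$ for each $i$ with $0 \le f_i(z) \le R$ --- requiring $f_i(z)$ to be a vertex captures both ``stays bounded'' and ``stays nonnegative'' at once. A path in $D$ from $x$ to $y$ corresponds exactly to a valid $F$-composition sending $x$ to $y$ whose orbit lies in $\{0, \dots, R\}$, and conversely every valid $F$-composition reaching $y$ yields such a path by the previous step; hence $x \xrightarrow{F}_+ y$ if and only if $x \le R$ and $D$ has a path from $x$ to $y$, decidable by graph search in time exponential in the input length, since $R$ is at most linear in $|y|$ and the $|b_i|$. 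I do not anticipate a real obstacle; the one step that needs a little care is the sign analysis showing that once a valid orbit exceeds $R$ it can never return to the range of $y$.
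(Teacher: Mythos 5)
Your proposal is correct and matches the paper's proof, which likewise runs the graph-search algorithm of Lemma~\ref{expanding-reachable} on the interval $[0,R]$ and observes that all preimages of $y$ under valid $F$-compositions must lie there. Your explicit sign analysis (a step with $a_i<0$ from a value above $Q$ would be invalid, and a step with $a_i>0$ only increases it) is just a spelled-out version of the same observation.
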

\begin{proof}
Use the algorithm of Lemma \ref{expanding-reachable}, but with the interval $I=[0,R]$ instead of $[-R,R]$. The argument in the proof of Lemma \ref{expanding-reachable} goes through as before, since all preimages of $y$ under \emph{valid} $F$-compositions must lie in $I$.
\end{proof}
\begin{remark}
This algorithm also works with any number of functions of the form $g(z) = z + k$ with $k > 0$, since these strictly increase absolute value on $\N \setminus I$, and so preimages of $y$ under valid compositions including them must lie in $I$.
\end{remark}

The algorithm of Lemma \ref{n-expanding-reachable} cannot handle functions of the form $g(z) = z - k$ with $k > 0$. Fortunately, as before we can reduce problems with this type of function to ``modular'' problems without them, using the following (much simpler) analog of Lemma \ref{lemma-cases}. We assume for now that all $a_i > 0$, and show how to handle other cases later.

\begin{lemma} \label{n-lemma-cases}
For any $x,y \in \N$, a set $S$ of functions $f_i(z) = a_i z + b_i$ with $a_i, b_i \in \Z$ and $a_i > 0$ for $1 \le i \le N$, and function $g(z) = z - k$ with $k > 0$, put $F = S \cup \{g\}$. Then $x \xrightarrow{F}_+ y$ if and only if $x \xrightarrow{S}_+ z \equiv y \pmod k$ for some $z \ge y$.
\end{lemma}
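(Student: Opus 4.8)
The plan is to prove the two directions separately, using Lemma~\ref{extraction-lemma} to manage the occurrences of $g$ exactly as in the analysis over $\Z$, but with the extra task of tracking validity of the orbits involved. Throughout, the useful structural feature is that $g$ and every $f_i$ in $S$ have \emph{positive} linear coefficient and $k > 0$, so subtracting $k$ repeatedly from a nonnegative value keeps it nonnegative as long as we stop at or above the target.

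For the ``if'' direction, suppose there is a valid $S$-composition $H$ with $H(x) = z$, $z \ge y$, and $z \equiv y \pmod k$. Then $z - y = nk$ for some $n \in \N$, so $g^n(z) = y$. Set $G = g^n \circ H$. The orbit of $G$ at $x$ consists of the orbit of $H$ at $x$ (all nonnegative, since $H$ is valid) followed by the decreasing run $z, z-k, \dots, z - nk = y$, whose entries are all at least $y \ge 0$. Hence $G$ is valid and $G(x) = y$, so $x \xrightarrow{F}_+ y$.

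For the ``only if'' direction, suppose $G$ is a valid $F$-composition with $G(x) = y$. Following the proof of Lemma~\ref{extraction-lemma}, repeatedly apply the identity $f_i \circ g = g^{a_i} \circ f_i$ (legitimate because $a_i \ge 1$) to push every occurrence of $g$ to the left, rewriting $G$ as $g^a \circ H$ with $H$ an $S$-composition; since all functions of $S$ and also $g$ have positive linear coefficient, Lemma~\ref{extraction-lemma}\ref{total-extraction} gives $a \ge 0$. The key point is that \emph{each such rewriting step preserves validity}: one such step replaces a stretch ``$v \xrightarrow{g} v-k \xrightarrow{f_i} f_i(v-k)$'' of the orbit by ``$v \xrightarrow{f_i} f_i(v) \xrightarrow{g} \cdots \xrightarrow{g} f_i(v) - a_i k = f_i(v-k)$''; the two endpoints $v$ and $f_i(v-k)$ are unchanged, and every newly introduced intermediate value lies in the decreasing run from $f_i(v)$ down to $f_i(v-k)$, hence is at least $f_i(v-k)$, which is nonnegative because the original orbit was valid. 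Performing this for all of the (finitely many) occurrences of $g$ shows $g^a \circ H$ is a valid $F$-composition; since the orbit of $H$ at $x$ is a prefix of the orbit of $g^a \circ H$ at $x$, the composition $H$ is valid as well. Finally $g^a(H(x)) = H(x) - ak = y$, so $z := H(x) = y + ak$ satisfies $z \ge y$ (as $a \ge 0$ and $k > 0$) and $z \equiv y \pmod k$, which is exactly what was wanted.

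The main obstacle is the validity-preservation claim in the ``only if'' direction: moving the $g$'s around is purely formal at the level of functions, but the rewritten compositions have longer and different orbits, and one must be certain these never dip below $0$. The point that rescues us is the monotonicity enforced by the positivity of the $a_i$ (and of $k$): the fresh intermediate values created by a single commutation are sandwiched between $f_i(v)$ and $f_i(v-k)$, and the smaller of these already occurred — nonnegatively — in the original orbit.
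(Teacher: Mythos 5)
Your proof is correct and follows essentially the same route as the paper: both directions rest on the decomposition $G = g^a \circ H$ from Lemma~\ref{extraction-lemma}\ref{total-extraction} together with the monotonicity coming from $a_i > 0$ and $k > 0$. The only difference is in how validity of $H$ is certified --- you track the intermediate orbit values introduced by each commutation $f_i \circ g = g^{a_i} \circ f_i$ and show each rewriting step preserves validity, while the paper argues more directly that deleting an occurrence of $g$ can only increase all subsequent orbit values; both arguments are sound.
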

\begin{proof}
Suppose $x \xrightarrow{F}_+ y$ via $G$. By Lemma \ref{extraction-lemma}\ref{total-extraction}, $G = g^a \circ H$ with $a \ge 0$ and $H$ being $G$ with all instances of $g$ removed. Since $G(x) = y$, we have $H(x) \equiv y \pmod k$. Now note that $g$ always decreases its argument, and since every $a_i$ is positive, each $f_i$ maps larger inputs to larger outputs. Therefore removing an instance of $g$ from $G$ can only increase the values of integers in its orbit. Since $G$ is valid this means $H$ must be as well, and since $G(x) = y$ this means $H(x) \ge y$. Therefore $x \xrightarrow{S}_+ H(x) \equiv y \pmod k$ with $H(x) \ge y$.

Conversely, suppose $x \xrightarrow{S}_+ z \equiv y \pmod k$ via $H$ with $z \ge y$. Then $(g^n \circ H)(x) = z - nk = y$ for some $n \in \N$. Putting $G = g^n \circ H$, $G$ is valid because $H$ is, and so we have $x \xrightarrow{F}_+ y$ via $G$.
\end{proof}

The algorithm of Lemma \ref{largest-mod-reachable} is almost exactly what we need to test the condition in Lemma \ref{n-lemma-cases}, since it computes the largest $z \equiv y \pmod k$ reachable by $S$-compositions. However, we must consider only those reachable by \emph{valid} $S$-compositions, and so need to modify the algorithm. This is not hard to do.

\begin{lemma} \label{n-largest-mod-reachable}
Given any $x, y \in \N$, $k \in \Z$ with $k \ne 0$, and a set $F$ of functions $f_i(z) = a_i z + b_i$ with $a_i, b_i \in \Z$ and $a_i > 0$ for $1 \le i \le N$, put $\mathbf{G} = \{ G : x \xrightarrow{F}_+ y \pmod k \: \text{via} \; G \}$. There is a \textsf{2-EXPTIME} algorithm which returns \textsc{Empty} if $\mathbf{G} = \emptyset$, and otherwise returns $\sup \; \{ G(x) : G \in \mathbf{G} \}$.
\end{lemma}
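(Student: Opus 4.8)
The plan is to reuse the algorithm proving Lemma~\ref{largest-mod-reachable} almost verbatim, carrying an extra validity condition through every step. Since every $a_i$ is assumed positive there is no \textsc{Negative} case to handle: I build the congruence-class graph $D$, convert it to a regular expression $R$, delete the symbol $\emptyset$, and put $R$ into disjunctive normal form $S_1 \mid \cdots \mid S_M$ exactly as before, so that again every literal of $R$ has positive linear coefficient and every $P_s$ coming from a reduced expression is a strictly increasing function. I will lean repeatedly on the resulting \emph{monotonicity fact}: if $G$ is a composition of such functions and $G$ is valid with respect to $z$, then $G$ is valid with respect to every $z' \ge z$, because the orbit from $z'$ dominates the orbit from $z$ pointwise. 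In particular, iterating a composition that is valid with respect to $z$ and strictly increases $z$ keeps it valid and drives its value to $\infty$.

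Next I would redo the per-clause prefix computation. For a clause $S_i = T_1 \cdots T_K$ I now let $x_j^{(i)}$ be the supremum of $P_s(x)$ over $s \in L(T_1 \cdots T_j)$ with $P_s$ \emph{valid with respect to $x$}, with the convention $x_j^{(i)} = -\infty$ if no such valid $P_s$ exists, and $x_0^{(i)} = x$. When $T_j$ is a literal $f_e$: the dead case propagates, $\infty$ propagates, and if $x_{j-1}^{(i)}$ is a finite value $c$ (which, by induction, is then attained by an actual valid composition and satisfies $c \ge 0$) then appending $f_e$ gives a valid composition iff $f_e(c) \ge 0$, and since $f_e$ is increasing no smaller attained prefix value can do better, so $x_j^{(i)} = f_e(c)$ when $f_e(c) \ge 0$ and $x_j^{(i)} = -\infty$ otherwise. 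When $T_j = \alpha^*$ and $x_{j-1}^{(i)}$ is a finite $c$, I replace the test $I(c,\alpha)$ by a validity-aware version $I_+(c,\alpha)$, defined to hold when some $s \in L(\alpha)$ has $P_s$ valid with respect to $c$ and $P_s(c) > c$: if it holds, iterating that $P_s$ gives $x_j^{(i)} = \infty$; if not, the monotonicity fact shows every intermediate value of a valid product of $\alpha$-matching compositions stays $\le c$, so its factors remain non-increasing from $c$ and nothing valid matching $\alpha^*$ increases $c$, whence $x_j^{(i)} = c$. Then $V_i = x_K^{(i)}$; I return \textsc{Empty} if $L(R) = \emptyset$ or if every $V_i = -\infty$, and otherwise $\max_i V_i$, which equals $\sup\{G(x):G\in\mathbf{G}\}$ since a valid composition reaching $y \pmod k$ matches $R$ exactly when it matches some $S_i$.

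The remaining work, and the main obstacle, is a validity-aware analog of Lemmas~\ref{I-lemma} and~\ref{I-algorithm} for computing $I_+(z,E)$ in polynomial time. The new wrinkle is that a prefix $\ell$ of literals can itself break validity, so I would introduce the directly-computable auxiliary predicate $\mathrm{val}(z,\ell)$ (``$P_\ell$ is valid with respect to $z$''), evaluated by walking the length-$|\ell|$ orbit. The key identity then becomes $I_+(z,\ell\alpha^*\beta) \iff I_+(z,\ell\beta) \lor \big(\mathrm{val}(z,\ell) \land I_+(P_\ell(z),\alpha)\big)$, with the usual specializations from $\beta = \epsilon$ or $\ell = \epsilon$ and the base case $I_+(z,\ell) = \mathrm{val}(z,\ell) \land (P_\ell(z) > z)$ for literal sequences. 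Proving the identity is where the bookkeeping must be done carefully: the forward direction uses that, after iterating an $\alpha$-matching composition from $P_\ell(z)$ to an arbitrarily large value, any $\beta$-matching composition is automatically valid there (reducedness again) and eventually lands above $z$; the backward direction uses the monotonicity fact to show that when $I_+(P_\ell(z),\alpha)$ fails, every intermediate value of a valid composition matching $\ell\alpha^*\beta$ stays $\le P_\ell(z)$, so the question collapses to $I_+(z,\ell\beta)$. With this identity in hand, the recursion of Lemma~\ref{I-algorithm} goes through unchanged — each call reduces to $I_+$ on strictly shorter disjoint subexpressions plus a $\mathrm{val}$ check — so $I_+$ is computable in polynomial time, and the whole algorithm runs in \textsf{2-EXPTIME} just as in Lemma~\ref{largest-mod-reachable} (an exponential-time first stage over the $|k|$ congruence classes, then work polynomial in the at most doubly-exponential $|R|$).
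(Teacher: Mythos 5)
Your proposal is correct and follows essentially the same route as the paper: the same congruence-class automaton, conversion to a reduced regular expression in disjunctive normal form, the same per-clause prefix suprema with a validity check at each literal, and a validity-aware predicate $I_+$ (the paper's $I'$) computed via the identity $I'(z,\ell\alpha^*\beta) \iff V(z,\ell)\land(I'(z,\ell\beta)\lor I'(P_\ell(z),\alpha))$, which is logically equivalent to the form you state since $I_+(z,\ell\beta)$ already forces $\mathrm{val}(z,\ell)$. The only differences are cosmetic (using $-\infty$ for dead clauses rather than discarding them).
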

\begin{proof}
As in the algorithm of Lemma \ref{largest-mod-reachable}, construct the graph $D$ and search it to determine if $x \xrightarrow{F} y \pmod k$ via any $F$-composition, not necessarily a valid one. If not, return \textsc{Empty}. Otherwise, consider $D$ to be a finite automaton as before, and convert it into a reduced regular expression $R$ in disjunctive normal form $R = S_1 | \dots | S_M$.

Given some $z \in \N$ and $\ell$ a sequence of literals which appear in $R$, we define $V(z,\ell)$ to mean that $P_\ell$ is a valid $F$-composition with respect to $z$. Now for any $z \in \N$ and reduced expression $E$, define $I'(z,E)$ to mean $\exists s \in L(E) : V(z, P_s) \land (P_s(z) > z)$. In words, this means that there is some $F$-composition valid with respect to $z$ which matches $E$ and increases $z$ (this is just the analog of $I(z,E)$ from Lemma \ref{largest-mod-reachable}, but restricted to only valid compositions). By an extension of Lemma \ref{I-lemma} which will prove momentarily, Lemma \ref{n-I-lemma}, we have that $I'(z, \ell \alpha^* \beta) \iff V(z, \ell) \land (I'(P_\ell(z), \alpha) \lor I'(z, \ell \beta))$. Using this we may compute $I'(z,E)$ in polynomial time with the analog of the algorithm of Lemma \ref{I-algorithm}, described shortly in Lemma \ref{n-I-algorithm}.

Now we continue as in the algorithm of Lemma \ref{largest-mod-reachable}, writing $S_i = T_1 \dots T_K$ and defining $x^{(i)}_j = \sup \; \{ P_s(x) : s \in L(T_1 \dots T_j) \}$. We calculate the values $x^{(i)}_j$ in the same way as before, except using $I'$ in place of $I$ when dealing with starred subexpressions $T_j$. This ensures that only $F$-compositions which are valid with respect to $x^{(i)}_{j-1}$ are used to compute $x^{(i)}_j$ when $T_j$ is a starred subexpression. When $T_j$ is a literal, we put $x^{(i)}_j = P_{T_j}(x^{(i)}_{j-1})$ as usual, but also check if $x^{(i)}_j < 0$. If so, then $P_{T_j}$ is not valid with respect to $x^{(i)}_{j-1}$, and thus no $F$-composition matching $S_i$ can be valid with respect to $x$. So we discard $S_i$ and move on. Otherwise again $x^{(i)}_j$ can be obtained from $x^{(i)}_{j-1}$ using a valid $F$-composition. Then if we compute $x^{(i)}_K$ without discarding $S_i$, the value $x^{(i)}_K$ can be obtained from $x$ using a valid $F$-composition, and so $V_i = x^{(i)}_K$ is the supremum of possible values $x$ is mapped to by any valid $F$-composition matching $S_i$.

If we discarded every $S_i$, then no valid $F$-compositions match $R$, so $\mathbf{G} = \emptyset$ and we return \textsc{Empty}. Otherwise, if $V$ is the largest of the values $V_i$ (at least one of these is defined since we did not discard every $S_i$) then $V = \sup \; \{ G(x) : G \in \mathbf{G} \}$ and we return it. As in Lemma \ref{largest-mod-reachable}, this algorithm takes time polynomial in $|R|$, and thus takes at most doubly-exponential time.
\end{proof}

Now we prove the analog of Lemma \ref{I-lemma} for $I'$.

\begin{lemma} \label{n-I-lemma}
If $z \in \N$, $\alpha$ and $\beta$ are reduced regular expressions, and $\ell$ is a sequence of literals, the following are true:
\begin{enumerate}
\item $I'(z, \ell \alpha^* \beta) \iff V(z, \ell) \land (I'(z, \ell \beta) \lor I'(P_\ell(z), \alpha))$ \label{n-I-lemma-literal-star-prefix}
\item $I'(z, \ell \alpha^*) \iff V(z, \ell) \land (I'(z, \ell) \lor I'(P_\ell(z), \alpha))$ \label{n-I-lemma-literal-star}
\item $I'(z, \alpha^* \beta) \iff I'(z, \alpha) \lor I'(z, \beta)$ \label{n-I-lemma-star-prefix}
\item $I'(z, \alpha^*) \iff I'(z,\alpha)$ \label{n-I-lemma-star}
\end{enumerate}
\end{lemma}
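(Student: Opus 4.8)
The plan is to follow the proof of Lemma \ref{I-lemma} essentially verbatim, deriving parts (\ref{n-I-lemma-literal-star}), (\ref{n-I-lemma-star-prefix}), (\ref{n-I-lemma-star}) from part (\ref{n-I-lemma-literal-star-prefix}) by the same substitutions used there ($\beta=\epsilon$ for (\ref{n-I-lemma-literal-star}); $\ell=\epsilon$ for (\ref{n-I-lemma-star-prefix}), using that $V(z,\epsilon)$ holds trivially since $z\in\N$ and that $P_\epsilon(z)=z$; and both of these together with the fact that $I'(z,\epsilon)$ is false for (\ref{n-I-lemma-star})), and proving (\ref{n-I-lemma-literal-star-prefix}) by the same case analysis, but now threading validity through every step. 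Two facts will do all the work. First, an $F$-composition $G=C\circ B\circ A$ is valid with respect to $z$ if and only if $A$ is valid w.r.t.\ $z$, $B$ is valid w.r.t.\ $A(z)$, and $C$ is valid w.r.t.\ $B(A(z))$; this is immediate from the definition of orbit. Second, since every literal in play corresponds to a function with $a_i>0$, every reduced composition $P$ is an increasing affine map, so for $0\le u\le v$ we get $P(u)\le P(v)$, and moreover if $P$ is valid w.r.t.\ $u$ then it is valid w.r.t.\ $v$, because each value in the orbit of $P$ at $v$ dominates the corresponding nonnegative value in the orbit at $u$.

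For the ``if'' direction of (\ref{n-I-lemma-literal-star-prefix}): if $I'(z,\ell\beta)$ holds, its witnessing composition also matches $\ell\alpha^*\beta$ (take zero copies of $\alpha$), is valid w.r.t.\ $z$, and increases $z$, so $I'(z,\ell\alpha^*\beta)$ holds. If instead $V(z,\ell)$ and $I'(P_\ell(z),\alpha)$ hold, set $w:=P_\ell(z)$ and pick $p$ matching $\alpha$, valid w.r.t.\ $w$, with $p(w)>w$; by the second fact every $p^n$ is valid w.r.t.\ $w$ and $p^n(w)\to\infty$. Pick any $q$ matching $\beta$ (its language is nonempty since $\beta$ is reduced) and $n$ large enough that $q$ is valid at $p^n(w)$ and $q(p^n(w))>z$; then $q\circ p^n\circ P_\ell$ matches $\ell\alpha^*\beta$, is valid w.r.t.\ $z$ by the first fact and $V(z,\ell)$, and increases $z$, so $I'(z,\ell\alpha^*\beta)$ holds.

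For the ``only if'' direction I argue the contrapositive. If $\neg V(z,\ell)$ then no composition factoring through $P_\ell$ is valid w.r.t.\ $z$, so $I'(z,\ell\alpha^*\beta)$ fails. Otherwise assume $V(z,\ell)$, $\neg I'(z,\ell\beta)$, and $\neg I'(P_\ell(z),\alpha)$, and let $G=q\circ p\circ P_\ell$ match $\ell\alpha^*\beta$ and be valid w.r.t.\ $z$, where $q$ matches $\beta$ and $p=p_m\circ\cdots\circ p_1$ is a composition of $F$-compositions each matching $\alpha$. By the first fact $p$ is valid w.r.t.\ $w:=P_\ell(z)\ge 0$ and $q$ is valid w.r.t.\ $p(w)$. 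The crux is to show $p(w)\le w$: inductively, the running values $w_i=p_i(w_{i-1})$ (with $w_0=w$) satisfy $0\le w_i\le w$, because $p_i$ is valid w.r.t.\ $w_{i-1}\le w$, hence (second fact) valid w.r.t.\ $w$, so $\neg I'(P_\ell(z),\alpha)$ forces $p_i(w)\le w$, and then monotonicity gives $w_i=p_i(w_{i-1})\le p_i(w)\le w$. Thus $p(w)\le w$ and $p(w)\ge 0$, so $q$, being valid w.r.t.\ $p(w)$, is valid w.r.t.\ $w$; hence $q\circ P_\ell$ matches $\ell\beta$ and is valid w.r.t.\ $z$, and $\neg I'(z,\ell\beta)$ gives $q(w)\le z$. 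Finally $G(z)=q(p(w))\le q(w)\le z$, so $G$ does not increase $z$, and since $G$ was arbitrary, $I'(z,\ell\alpha^*\beta)$ fails.

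The main obstacle is exactly this inductive claim that $p(w)\le w$. Unlike in Lemma \ref{I-lemma}, one cannot simply say ``no composition matching $\alpha$ increases $w$, hence $p$ does not'', because $\neg I'(P_\ell(z),\alpha)$ only constrains compositions that are \emph{valid} at $w$; the fix is to use the second fact to upgrade validity of each factor $p_i$ from its running argument $w_{i-1}$ up to $w$ before invoking $\neg I'(P_\ell(z),\alpha)$, while carrying along the invariant that the running arguments stay nonnegative so the monotonicity and validity-upgrade facts remain applicable. Everything else is a routine transcription of the argument over $\Z$.
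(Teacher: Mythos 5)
Your proposal is correct and follows essentially the same route as the paper: derive parts (2)--(4) from part (1) by the substitutions $\beta=\epsilon$ and $\ell=\epsilon$, and prove part (1) by the case analysis of Lemma \ref{I-lemma} with validity threaded through, using that validity and non-increase propagate monotonically because all linear coefficients are positive integers. Your explicit induction showing the running values $w_i$ stay in $[0,w]$ (upgrading validity of each factor from $w_{i-1}$ to $w$ before invoking $\neg I'(P_\ell(z),\alpha)$) spells out a step the paper's proof passes over more tersely, but the underlying argument is the same.
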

\begin{proof}
\begin{enumerate}
\item $I'(z, \ell \beta)$ implies $I'(z, \ell \alpha^* \beta)$ because $\ell \beta$ matches $\ell \alpha^* \beta$. If $I'(P_\ell(z), \alpha)$ holds, then there is an $F$-composition $p$ matching $\alpha$ which is valid with respect to and increases $P_\ell(z)$. Because $p$ has a positive linear coefficient (since we assumed all $a_i > 0$), $p$ must increase anything greater than $P_\ell(z)$, and thus repeated applications of $p$ can increase $P_\ell(z)$ as much as desired. Therefore for any particular $F$-composition $q$ matching $\beta$, there is some $n \in \N$ such that $q \circ p^n$ is valid with respect to and increases $P_\ell(z)$, and $q \circ p^n \circ P_\ell$ increases $z$. If we also have $V(z, P_\ell)$, then $P_\ell$ is valid with respect to $z$, and then $q \circ p^n \circ P_\ell$ is valid with respect to and increases $z$. Then, since $q \circ p^n \circ P_\ell$ matches $\ell \alpha^* \beta$, $I'(z, \ell \alpha^* \beta)$ holds.

Suppose $V(z, \ell)$ does not hold. Then no $F$-composition $p$ matching $\ell \alpha^* \beta$ can be valid with respect to $z$, since the first part of $p$ must be $P_\ell$, which is not valid with respect to $z$ and thus will cause some integers in the orbit of $p$ to be negative. So $I'(z, \ell \alpha^* \beta)$ does not hold.

Suppose that neither $I'(z, \ell \beta)$ nor $I'(P_\ell(z), \alpha)$ hold. Any valid $F$-composition matching $\ell \alpha^* \beta$ is of the form $q \circ p \circ P_\ell$ where $q$ matches $\beta$ and $p$ is a composition of $F$-compositions matching $\alpha$. Since by our assumption valid $F$-compositions matching $\alpha$ do not increase $P_\ell(z)$, and these have positive linear coefficients, they do not increase anything smaller than $P_\ell(z)$. Thus no valid composition of $F$-compositions matching $\alpha$ increases $P_\ell(z)$, and so $p$ does not increase $P_\ell(z)$. Therefore because $q$ is valid with respect to $(p \circ P_\ell)(z)$, $q \circ P_\ell$ is valid with respect to $z$, and since $I'(z, \ell \beta)$ does not hold, $q \circ P_\ell$ does not increase $z$. Because $q$ has a positive linear coefficient, $q \circ p \circ P_\ell$ does not increase $z$ either. Thus no valid $F$-composition matching $\ell \alpha^* \beta$ increases $z$, and $I(z, \ell \alpha^* \beta)$ does not hold.
\item Put $\beta = \epsilon$ in (\ref{n-I-lemma-literal-star-prefix}) and use $L(\ell \alpha^* \epsilon) = L(\ell \alpha^*)$ and $L(\ell \epsilon) = L(\ell)$.
\item Put $\ell = \epsilon$ in (\ref{n-I-lemma-literal-star-prefix}) and use $L(\epsilon \beta) = L(\beta)$ and $P_\epsilon(z) = z$.
\item Put $\beta = \epsilon$ in (\ref{n-I-lemma-star-prefix}), use $L(\alpha^* \epsilon) = L(\alpha^*)$, and note that $I'(z, \epsilon)$ is clearly false. \qedhere
\end{enumerate}
\end{proof}
As before, these relationships yield a recursive algorithm for computing $I'$:
\begin{lemma} \label{n-I-algorithm}
There is a \textsf{P} algorithm to compute $I'(z, E)$ for any $z \in \N$ and reduced regular expression $E$.
\end{lemma}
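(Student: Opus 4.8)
The plan is to follow the proof of Lemma \ref{I-algorithm} essentially verbatim, recursing on the structure of the reduced expression $E$ and using the identities of Lemma \ref{n-I-lemma} in place of those of Lemma \ref{I-lemma}. The one genuinely new ingredient is that several of the recursive identities for $I'$ carry an extra conjunct of the form $V(z,\ell)$, so at each step we will need to decide whether a given sequence of literals $\ell$ yields a composition valid with respect to a given integer. This is easy: $\ell$ is a sequence of literals appearing in $E$, hence has length $O(|E|)$, and we evaluate $P_\ell$ one literal at a time, checking after each application that the running value is nonnegative; $V(z,\ell)$ holds precisely when every such value is nonnegative, and this takes time polynomial in $|E|$. (When $\ell = \epsilon$, $V(z,\epsilon)$ is vacuously true, which matches the absence of a $V$-conjunct in parts \ref{n-I-lemma-star-prefix} and \ref{n-I-lemma-star} of Lemma \ref{n-I-lemma}.)

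With $V$ in hand, the recursion proceeds by cases on the topmost operator of $E$, exactly as in Lemma \ref{I-algorithm}. If $E = \epsilon$ then $I'(z,E)$ is false, and if $E$ is a literal --- or, more generally, a sequence of literals --- then only $P_E$ matches it, so $I'(z,E)$ holds iff $P_E$ is valid with respect to $z$ and $P_E(z) > z$, both of which we read off from the step-by-step evaluation above. If $E = G^*$, part \ref{n-I-lemma-star} of Lemma \ref{n-I-lemma} gives $I'(z,E) = I'(z,G)$. If $E$ is a concatenation, flatten it to $F_1 \cdots F_K$, discard any $\epsilon$ factors, and locate the smallest $j$ with $F_j = \alpha^*$ a starred subexpression (if there is none, $E$ is a sequence of literals and we are in the base case). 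Then $E = \ell\alpha^*\beta$ with $\ell = F_1 \cdots F_{j-1}$ a sequence of literals and $\beta = F_{j+1}\cdots F_K$, and we apply part \ref{n-I-lemma-literal-star-prefix} of Lemma \ref{n-I-lemma}, namely $I'(z,E) \iff V(z,\ell) \land \bigl(I'(z,\ell\beta) \lor I'(P_\ell(z),\alpha)\bigr)$, using the degenerate forms \ref{n-I-lemma-literal-star} (when $j = K$, so $\beta = \epsilon$) and \ref{n-I-lemma-star-prefix} (when $j = 1$, so $\ell = \epsilon$) as appropriate. In every case $I'(z,E)$ is computed directly, or reduced to $I'$ on a strictly shorter subexpression, or reduced to a disjunction of two calls on disjoint subexpressions, together with a directly-computable $V$-check and evaluations of $P_\ell$.

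The complexity bound is then identical to that of Lemma \ref{I-algorithm}: since each reduction passes to strictly shorter expressions, the recursion tree has $O(|E|)$ levels; since the two recursive calls in the concatenation case are on disjoint subexpressions of $E$, each level contains $O(|E|)$ calls; and the non-recursive work at each node --- flattening, finding $j$, evaluating $P_\ell$, and checking $V(z,\ell)$ --- is polynomial in $|E|$. Hence $I'(z,E)$ can be computed in polynomial time. I do not anticipate any real obstacle beyond bookkeeping; the only point needing a moment's care is verifying that the additional $V$-conjuncts do not alter the shape of the recursion --- which they do not, since each $V(z,\ell)$ is computed without recursion --- so the termination and per-level counting arguments from Lemma \ref{I-algorithm} carry over unchanged.
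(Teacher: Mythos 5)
Your proposal is correct and follows essentially the same approach as the paper: run the recursion of Lemma \ref{I-algorithm} with the identities of Lemma \ref{n-I-lemma}, computing each extra $V(z,\ell)$ conjunct directly by evaluating the orbit of $P_\ell$ and checking nonnegativity, which adds only a polynomial factor. The additional detail you give about the base case (a sequence of literals now also requires the validity check) and about $V(z,\epsilon)$ being vacuously true is consistent with, and slightly more explicit than, the paper's argument.
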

\begin{proof}
The algorithm is identical to that of Lemma \ref{I-algorithm}, except that in addition to reducing $I'(z,E)$ to values of $I'$ on shorter expressions, the identities in Lemma \ref{n-I-lemma} also require evaluations of $V(z,\ell)$. Clearly $V(z,\ell)$ can be computed in polynomial time, by simply calculating the entire orbit of $P_\ell$ and checking that every integer in it is nonnegative. Adding a single computation of $V$ at each recursive call in the algorithm of Lemma \ref{I-algorithm} multiplies its runtime by a polynomial factor, so the overall algorithm still runs in polynomial time.
\end{proof}

Now we can give a general algorithm for the affine reachability problem over $\N$.

\begin{theorem}
There is a \textsf{2-EXPTIME} algorithm to decide, given any $x, y \in \N$ and a finite set $F$ of functions $f_i(z) = a_i z + b_i$ with $a_i, b_i \in \Z$, whether $x \xrightarrow{F}_+ y$.
\end{theorem}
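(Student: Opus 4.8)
The plan is to imitate the case analysis in the proof of Theorem~\ref{main-algorithm}, recursing on $|F|$ and substituting the $\N$-analogs of the earlier lemmas (Lemmas~\ref{n-expanding-reachable}, \ref{n-lemma-cases}, and~\ref{n-largest-mod-reachable}). First I would clear away degenerate or never-usable functions. If some $f_j$ has $a_j = 0$ with $b_j \ge 0$, then $x \xrightarrow{F}_+ y$ iff $x \xrightarrow{F \setminus \{f_j\}}_+ y$ or $b_j \xrightarrow{F \setminus \{f_j\}}_+ y$, and we recurse on both. If $a_j = 0$ with $b_j < 0$, or $a_j \le -1$ with $b_j < 0$, then $f_j(z) \le b_j < 0$ for all $z \in \N$, so $f_j$ is never valid and we recurse on $F \setminus \{f_j\}$; likewise if $f_j$ is the identity we recurse on $F \setminus \{f_j\}$. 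After these reductions every surviving $f_i$ has $a_i \ge 1$ (with $b_i \ne 0$ when $a_i = 1$) or else $a_i \le -1$ with $b_i \ge 0$.

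Next I would split on whether $F$ contains a function with negative linear coefficient. Suppose first that it does not, so every $f_i$ has $a_i \ge 1$. If additionally there is no ``downward shift'' $g(z) = z - k$ with $k > 0$, then every function has $|a_i| > 1$ or is of the form $z + k$ with $k > 0$, and the algorithm of Lemma~\ref{n-expanding-reachable} together with the remark following it settles the question. If there is such a $g$, then $S = F \setminus \{g\}$ has all linear coefficients positive, so Lemma~\ref{n-lemma-cases} reduces the problem to deciding whether some $z \ge y$ with $z \equiv y \pmod k$ has $x \xrightarrow{S}_+ z$; I would run the algorithm of Lemma~\ref{n-largest-mod-reachable} on $S$ with modulus $k$ and answer true exactly when it returns a (finite or infinite) supremum $V$ --- rather than \textsc{Empty} --- with $V \ge y$.

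The genuinely new ingredient, and the step I expect to be the main obstacle, is the remaining case in which $F$ does contain a function with negative linear coefficient, since both Lemma~\ref{n-lemma-cases} and Lemma~\ref{n-largest-mod-reachable} require all linear coefficients to be positive --- unlike their $\Z$-counterparts, whose analysis tolerates negative coefficients. The observation I would rely on is that a function $f$ with $a_f \le -1$ and $b_f \ge 0$ is valid with respect to an argument $z \in \N$ only when $z \le b_f / |a_f|$, so, writing $T = \max_i |b_i|$, every such $f$ is validly applicable only on $[0, T] \cap \N$ and moreover maps that set into $[0, b_f] \subseteq [0, T]$. Letting $\mathcal N$ be the set of negative-coefficient functions and $F' = F \setminus \mathcal N$, any valid $F$-composition from $x$ to $y$ therefore factors as $H_r \circ f^{(r)} \circ \cdots \circ f^{(1)} \circ H_0$, where $f^{(1)}, \dots, f^{(r)} \in \mathcal N$ are its successive $\mathcal N$-applications, each $H_j$ is a valid $F'$-composition, and the argument of each $f^{(j)}$ and each value $f^{(j)}(\cdot)$ lies in the bounded set $[0,T]$. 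I would exploit this by building a directed graph $D$ on $([0,T] \cap \N) \cup \{\mathrm{src}, \mathrm{tgt}\}$ with an edge $\mathrm{src} \to \mathrm{tgt}$ iff $x \xrightarrow{F'}_+ y$, an edge $v \to \mathrm{tgt}$ iff $v \xrightarrow{F'}_+ y$, and an edge $u \to w$ (from $\mathrm{src}$ or from $u \in [0,T]$) whenever $u \xrightarrow{F'}_+ m$ and $f(m) = w$ for some $f \in \mathcal N$ and some valid argument $m$ of $f$; each $\xrightarrow{F'}_+$ sub-query is answered by a recursive call, legitimate because $|F'| < |F|$. By the factorization above and its converse, $x \xrightarrow{F}_+ y$ iff $D$ has a path from $\mathrm{src}$ to $\mathrm{tgt}$, which graph search decides.

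For the running time I would argue as in Theorem~\ref{main-algorithm}: every recursive call strictly decreases $|F|$, so although the graph-search step spawns exponentially many sub-queries (one per edge, hence per pair of waypoints in $[0,T]$), the total number of recursive calls across the whole computation is still exponential, and each leaf --- an invocation of Lemma~\ref{n-expanding-reachable} or Lemma~\ref{n-largest-mod-reachable} --- runs in at most doubly-exponential time. Hence the algorithm runs in doubly-exponential time.
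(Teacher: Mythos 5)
Your proposal is correct and follows essentially the same route as the paper: clear away degenerate functions, use Lemma \ref{n-expanding-reachable} (with its remark) when no downward shift is present, combine Lemmas \ref{n-lemma-cases} and \ref{n-largest-mod-reachable} to handle a shift $z-k$, and exploit the fact that negative-coefficient functions are validly applicable only on a bounded interval to reduce that case to graph search over waypoints with recursive calls on a smaller function set. The only (harmless) difference is that you strip out all negative-coefficient functions at once and encode every interleaving in a single graph, whereas the paper removes them one at a time and recurses; your version makes the factorization argument slightly more explicit but changes nothing essential.
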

\begin{proof}
There are several cases:
\begin{enumerate}
\item For some $j$, $a_j = 0$: As in Theorem \ref{main-algorithm}, recursively determine $x \xrightarrow{F \setminus \{f_j\}} y$ and $b_j \xrightarrow{F \setminus \{f_j\}} y$ and return true if and only if at least one holds.
\item For some $j$, $a_j < 0$: There are only a finite number of $z \in \N$ which $f_j$ can be applied to without giving a negative result: for example, they are all in $[0, b_j]$. Create a directed graph $D$ with a vertex for each of these, as well as vertices for $x$ and $y$ if not already present. Add edges indicating how $f_j$ maps these values to each other. Use this algorithm recursively on $S = F \setminus \{f_j\}$ to add edges corresponding to mappings by all possible valid $S$-compositions. Then there is a path in $D$ from $x$ to $y$ if and only if $x \xrightarrow{F}_+ y$. Use graph search to test if such a path exists, and return the result. \label{n-ma-negative}
\item For some $j$, $a_j = 1$ and $b_j = 0$: As in Theorem \ref{main-algorithm}, recursively solve $x \xrightarrow{F \setminus \{f_j\}} y$ and return the result.
\item For some $j$, $a_j = 1$ and $b_j < 0$: Run the algorithm in Lemma \ref{n-largest-mod-reachable} on $F \setminus \{f_j\}$ with $k = b_j$. If it returns \textsc{Empty}, then by Lemma \ref{n-lemma-cases} we cannot have $x \xrightarrow{F}_+ y$ and we return false. Otherwise the algorithm returns $V = \sup \; \{ G(x) : x \xrightarrow{F}_+ y \pmod k \text{ via } G \}$, and again by Lemma \ref{n-lemma-cases} we have $x \xrightarrow{F}_+ y$ if and only if $V \ge y$. So return true if and only if $V \ge y$.
\item Otherwise, for all $i$ we have $a_i \ge 1$, and if $a_i = 1$ then $b_i > 0$: Use the algorithm of Lemma \ref{n-expanding-reachable} (which works in this case as noted in the remark), and return the result.
\end{enumerate}
The analysis of the runtime is similar to that given in Theorem \ref{main-algorithm}, except for case \ref{n-ma-negative}. The graph $D$ created in that case has exponentially-many vertices (linear in the value of $b_j$), so each invocation of the algorithm makes at most exponentially-many recursive calls. There are at most a linear number of levels, since each recursive call has one less affine function than its parent. Thus there are exponentially-many recursive calls in total. The work done in each call takes at most doubly-exponential time (since the algorithm in Lemma \ref{n-largest-mod-reachable} can take this long), so the overall running time of the algorithm is at most doubly-exponential.
\end{proof}

\section{A Lower Bound} \label{lower-bound}

While it may be hoped that there are vastly more efficient algorithms for the affine reachability problems than the \textsf{2-EXPTIME} methods we have given here, the following theorem shows that polynomial-time algorithms are unlikely.

\begin{theorem}
The affine reachability problems over $\Z$ and $\N$ are \textsf{NP}-hard.
\end{theorem}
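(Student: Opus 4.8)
The plan is to reduce a standard \textsf{NP}-complete problem, and the natural candidate is \textsc{Subset-Sum}: given positive integers $c_1, \dots, c_n$ and a target $t$, decide whether some subset of the $c_i$ sums to $t$. I would encode each element $c_i$ as a choice between two affine functions applied in a fixed order: at ``stage $i$'' we either apply $f_i(z) = z + c_i$ (include $c_i$) or $f_i'(z) = z + 0 = z$, i.e.\ the identity (exclude $c_i$). Starting from $x = 0$, after passing through all $n$ stages the reachable values over $\Z$ are exactly the subset sums; we then ask whether $t$ is reachable. The subtlety is that the functions in $F$ form a \emph{set}, applied in any order and any number of times, not an ordered sequence, so I need to force the intended left-to-right structure.

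To enforce ordering and single use, I would thread a ``program counter'' through a second component of the state by working in a single integer via a positional (mixed-radix) encoding: pick a base $B$ larger than $t + \sum c_i$, and represent the pair (counter $i$, partial sum $s$) as the integer $s + i \cdot B$. Stage-$i$ functions then become affine maps that add $B$ (to advance the counter) and optionally add $c_i$, but are ``gated'' so that they only produce a value consistent with the current counter being $i$ --- and since affine functions cannot directly test their input, I would instead arrange the coefficients so that applying a stage-$j$ function when the counter is not $j$ either overshoots past the final target region or lands in a residue class that can never return to the accepting value. A cleaner alternative, which I expect to be the one actually used, is to avoid gating entirely: use distinct large ``offsets'' $M_i$ (e.g.\ $M_i = R^i$ for a big $R$) so that applying the stage-$i$ functions contributes a detectable amount $M_i$ to the state, and the target $y$ is chosen as $t + \sum_i M_i$; then reaching $y$ over $\Z$ forces each stage to be used exactly once (any deviation changes the ``high-order digits'' irreparably), and the low-order digit records a subset sum equal to $t$. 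One verifies $x \xrightarrow{F} y$ iff the \textsc{Subset-Sum} instance is a yes-instance.

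For the version over $\N$, essentially the same construction works: all the functions I use have the form $z \mapsto z + (\text{nonnegative constant})$, so every composition is automatically valid (no orbit value ever goes negative when started at $x = 0$), and hence $x \xrightarrow{F}_+ y$ holds exactly when $x \xrightarrow{F} y$ does. Thus a single reduction handles both problems. Finally I would check that the reduction is polynomial-time: the constants $c_i$, $t$, the base/offsets $R^i$ or $M_i$, and the target $y$ all have polynomially many bits in the input size (the offsets contribute $O(n \log R)$ bits with $\log R$ polynomial in the bit-length of the $c_i$), and there are only $O(n)$ functions, each with polynomially-sized coefficients.

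The main obstacle is making the ``force each stage to be used exactly once, in order'' mechanism genuinely airtight while keeping everything affine: because affine functions are applied with no control flow, I must ensure that \emph{every} unintended composition --- wrong order, a stage skipped, a stage repeated, functions interleaved arbitrarily --- provably fails to reach $y$. I expect this to come down to a clean base-$R$ / digit-separation argument: choosing $R$ large enough that the contributions of different stages live in non-overlapping digit positions, so that the multiset of stages used is read off uniquely from $G(0)$, and $G(0) = y$ pins that multiset to be $\{1, \dots, n\}$ with the remaining low-order digits encoding a valid subset sum hitting $t$.
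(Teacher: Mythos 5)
There is a genuine gap, and it is exactly the one you flag as ``the main obstacle'': your mechanism for forcing each stage to be used exactly once does not work, and in fact cannot work with translation maps alone. Since every function you use is a translation $z \mapsto z + d$ with $d \ge 0$, all your functions commute and the set of values reachable from $0$ is simply $\{\sum_j m_j d_j : m_j \in \N\}$, the numerical semigroup generated by the translation amounts --- there is no way to bound or order the multiplicities. Concretely, your $M_i = R^i$ construction fails: take $n=2$, $c_1 = c_2 = 1$, $t = 3$ (a no-instance of \textsc{Subset-Sum}, since the largest subset sum is $2$). Your target is $y = R^2 + R + 3$, and your stage-$1$ maps are $z \mapsto z + R + 1$ and $z \mapsto z + R$. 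Applying the first of these $3$ times and the second $R-2$ times adds $3(R+1) + (R-2)R = R^2 + R + 3$, so $0 \xrightarrow{F} y$ even though the \textsc{Subset-Sum} instance is unsolvable: repeating a low stage $R$ extra times manufactures a high-order digit, so the multiset of stages is \emph{not} read off uniquely from $G(0)$. Your other suggested fix (a program counter gated by residue classes or a second mixed-radix component) is not carried out, and gating is impossible for translations since they cannot test their input; making it work with multiplicative coefficients would be a substantially different and harder construction.

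The paper avoids all of this by choosing a source problem whose semantics already match unrestricted application of commuting translations: the Integer Knapsack Problem (given $w_1,\dots,w_N,C \in \N$, decide whether $\sum_i w_i x_i = C$ for some $x_i \in \N$), which is \textsf{NP}-complete even though multiplicities are unbounded. Setting $f_i(z) = z + w_i$, commutativity gives immediately that $0 \xrightarrow{F} C$ iff the instance is solvable, with no ordering or single-use gadget needed, and validity over $\N$ is automatic. If you replace \textsc{Subset-Sum} by this unbounded variant, your reduction collapses to the paper's and the rest of your write-up (polynomial size of the instance, the $\N$ case) goes through.
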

\begin{proof}
We give a reduction from the Integer Knapsack Problem (IKP), which is to determine, given $w_1, \dots, w_N, C \in \N$, whether there are $x_1, \dots, x_N \in \N$ such that $\sum_i w_i x_i = C$. This problem is known to be \textsf{NP}-complete \cite{lueker}. For a given instance of the IKP, $w_1, \dots, w_N, C \in \N$, let the set $F$ consist of the affine functions $f_i(z) = z + w_i$ for $1 \le i \le N$. If there exist $x_1, \dots, x_N \in \N$ such that $\sum_i w_i x_i = C$, then $(f_1^{x_1} \circ \dots \circ f_N^{x_N})(0) = \sum_i w_i x_i = C$, so $0 \xrightarrow{F} C$. Since the functions $f_i$ all commute, if $0 \xrightarrow{F} C$ then $C = (f_1^{x_1} \circ \dots \circ f_N^{x_N})(0) = \sum_i w_i x_i$ for some $x_1, \dots, x_N \in \N$. Thus $0 \xrightarrow{F} C$ if and only if the IKP instance is solvable. Computing $F$ given an IKP instance can obviously be done in polynomial time, so this gives a polynomial-time many-one reduction from IKP to the affine reachability problem over $\Z$, showing that the latter is \textsf{NP}-hard. The reduction to affine reachability over $\N$ is exactly the same, since all $F$-compositions are valid.
\end{proof}

\section{Conclusion}

We gave \textsf{2-EXPTIME} algorithms for the affine reachability problems over $\Z$ and $\N$, and showed that they are \textsf{NP}-hard. Beyond improving these upper and lower bounds, a natural generalization that would be interesting to consider is if integer or integer-valued polynomials are allowed instead of just affine functions. Also, the original problem which this paper treated a special case of, namely reachability for affine evolution over $\Q$, remains open. This provides another clear direction for future work.

\bibliographystyle{elsart-num}
\bibliography{main}

\end{document}